\renewcommand{\appendixname}{Appendix}
\newcommand{\startappendix}{
  \appendix

  \section*{\appendixname}
  \addcontentsline{toc}{section}{\appendixname}

  \titleformat{\section}[block]
  {\normalfont\large\bfseries}
  {\Alph{section}}{1em}{}

  \renewcommand{\theappendixthm}{\arabic{appendixthm}}

  \renewcommand{\theappendixlem}{\arabic{appendixlem}}

  \renewcommand{\theappendixcor}{\arabic{appendixcor}}

  \let\thm\appendixthm
  \let\endthm\endappendixthm
  \let\lem\appendixlem
  \let\endlem\endappendixlem
  \let\cor\appendixcor
  \let\endcor\endappendixcor
}
\theoremstyle{definition}
\newtheorem{con}{Condition}[section]
\newtheorem{lem}{Lemma}[section]
\newtheorem{cor}{Corollary}[section]
\newtheorem{thm}{Theorem}[section]
\newtheorem{remark}{Remark}
\definecolor{orange}{rgb}{1.0, 0.647, 0.0}
\title{Robust $M$-Estimation of Scatter Matrices \\via Precision Structure Shrinkage}
\author[1]{Soma Nikai}
\author[2]{Yuichi Goto}
\author[2]{Koji Tsukuda}
\affil[1]{Joint Graduate School of Mathematics for Innovation, Kyushu University}
\affil[2]{Faculty of Mathematics, Kyushu University}
\date{}
\begin{document} 
\maketitle 

\begin{abstract}
Maronna's and Tyler's $M$-estimators are among the most widely used robust estimators for scatter matrices. 
However, when the dimension of observations is relatively high, their performance can substantially deteriorate in certain situations, particularly in the presence of clustered outliers. 
To address this issue, we propose an estimator that shrinks the estimated precision matrix toward the identity matrix. 
We derive a sufficient condition for its existence, discuss its statistical interpretation, and establish upper and lower bounds for its additive finite sample breakdown point. 
Numerical experiments confirm the robustness of the proposed method. 

\noindent
\textbf{Keywords}: breakdown point; $M$-estimation; penalization; regularization. \\
\textbf{MSC2020}: 62H12, 62F35. 
\end{abstract}

\section{Introduction}
The estimation of scalar multiples of a covariance matrix and associated quantities, such as inverses, eigenvalues, and eigenvectors, plays an important role in multivariate statistical analysis. 
For example, in principal component analysis, it is important to focus on the eigenvectors and the contribution ratios of a covariance matrix. 
In addition, clustering methods such as spectral clustering algorithms and discriminant methods such as Fisher's linear discriminant analysis also require estimators of scalar multiples of a covariance matrix or its inverse. 
Throughout this paper, we refer to scalar multiples of a covariance matrix as covariance structures that are also called scatter matrices or shape matrices. 
This paper investigates methods for estimating the covariance structure of a population distribution. 

Let us consider a $p$-dimensional population distribution with mean vector $\bm{0}$ and covariance matrix $\Sigma$, from which an i.i.d.\ sample of size $n$ is drawn. 
The Sample Covariance Matrix (SCM for short) is commonly used as an estimator of a covariance matrix. 
However, it may perform poorly when the data contain outliers. 
In fact, the Additive finite sample Breakdown Point (AdBP for short) of the SCM is $1 / (n+1)$, which implies that the inclusion of even one single outlier in the data can severely degrade the performance of the SCM. 
See Subsection~\ref{ssec:AdBP} for the definition of the AdBP. 

To address the presence of outliers, the $M$-estimators of scatter matrices $V$ for elliptical distributions \cite{M76}, in particular Tyler's $M$-Estimator (TME for short) that is distribution-free within the class of elliptical distributions \cite{T87}, have been proposed as classical approaches. 
These methods simultaneously estimate covariance and precision structures through simple algorithms based on the Mahalanobis distance. 
Here, throughout this paper, we refer to the inverse of a covariance matrix as the precision matrix and to scalar multiples of a precision matrix as precision structures. 
The classical $M$-estimators are not only computationally feasible but also enjoy asymptotic properties such as $\sqrt{n}$-consistency and asymptotic normality \cite{M76, T88}. 
In addition, desirable properties regarding asymptotic relative efficiency and influence functions are also often highlighted. 
For these reasons, many extensions and variants of these classical $M$-estimators have been studied \cite{AS07, CWH11, D98, DT05, GLN20, OT14, SW14, TFNO23, TYN23}. 
However, it is known that the asymptotic breakdown point and AdBP of classical $M$-estimators are at most $1/p$ \cite{DT05, M76, T8614}. 

Various robust estimators and algorithms have been proposed in order to tackle low breakdown points of classical $M$-estimators in high dimensional settings. 
These include the Stahel--Donoho (SD- for short) estimator \cite{S81}, the minimum volume ellipsoid estimator and the Minimum Covariance Determinant (MCD- for short) estimator \cite{R85}, the S-estimator \cite{D87}, the MM-estimator \cite{TT00}, as well as methods based on statistical depth functions \cite{CGR18}. 
These methods achieve higher breakdown points but are generally more computationally demanding, especially in high dimensional settings, than the classical $M$-estimators \cite{M76, T87}. 
A comparison between classical $M$-estimators and these methods from different perspectives remains important \cite{T91}, even though algorithmic and implementation level efforts to improve their computational efficiency have been reported \cite{PP01, PP07, Rv99, ZSD23}. 
Notably, all the estimators mentioned above, including the classical $M$-estimators, are linearly equivariant. 
See Remark~\ref{rem:le} for the definition of linear equivariance. 
Advantages of linearly equivariant estimators of $V$ under elliptical distributions lie not only in their explicit equivariance property but also in the applicability of the classical asymptotic theory. 
In the setting where $p$ is fixed and $n$ tends to infinity, the existence of a probability limit implies the consistency of the estimated covariance structure \cite{MMYS17}. 
On the other hand, for linearly equivariant estimators of $V$, it is known that their replacement finite sample breakdown points are at most $(n-p+1)/(2n+p-1)$ when $p<n$ and the data are in general position \cite{D87} and so this upper bound is small when $p$ is close to $n$. 

To date, various $M$-estimation methods for $V$, particularly those derived from the TME, have been proposed \cite{D98, GLN20, SW14}. 
Furthermore, penalized $M$-estimators, in which a penalty term is added to the objective function of the classical $M$-estimation problem, have also been widely studied \cite{OT14, PCQ14, SBP14, TYN23, W12a, ZWG13}. 
Among these approaches, two particularly well-known estimators are the diagonally loaded Normalized Sample Matrix Inversion (LNSMI for short) estimator, also known as the Chen–Wiesel–Hero (CWH for short) estimator, and the Regularized Tyler's $M$-estimator (RegTYL for short) \cite{AS07, CWH11, OT14}. 
The LNSMI and RegTYL are simple estimators that introduce a ridge type shrinkage toward the identity matrix $I_p$ in the covariance structure estimation part of the TME, and they can be interpreted as incorporating the well-known Ledoit--Wolf (LW- for short) estimator \cite{LW04} of a covariance matrix \cite{OT14, W08}. 
In relation to LNSMI and RegTYL, the Spatial Sign Covariance Matrix (SSCM for short) \cite{VKO00} and the Generalized SSCM (GSSCM for short) have recently been revisited by \cite{RR19, TYN23}. 
Although the SSCM sacrifices linear equivariance, the loss of equivariance enables the estimator to exceed the aforementioned upper bound on the breakdown point \cite{D87}. 
When the population mean is known, the asymptotic breakdown point of SSCM reaches one. 
Recent research investigates the relationship between estimators including LNSMI and RegTYL, and GSSCM, and examines sufficient conditions under which the AdBP of LNSMI and RegTYL reaches one \cite{TYN23}. 
These findings suggest that shrinkage toward $I_p$ may lead to estimators with strong robustness, yet its effect on robustness has not been thoroughly investigated. 

Through numerical experiments, we found that the performance of both TME and LNSMI deteriorates in the presence of clustered outliers when $n > p$. 
The reason appears to be a failure of the precision structure. 
Hence, in this paper, we propose an $M$-estimator of $V$ that introduces 
ridge type shrinkage toward $I_p$ in a precision structure estimation step. 
Specifically, we incorporate the Efron--Morris (EM- for short) estimator \cite{EM76}, which is a well-known estimator of the precision matrix along with the Kubokawa--Srivastava (KS- for short) estimator \cite{KS08}, into the $M$-estimation of $V$. 

Clustered outliers have been extensively discussed for many years in the context of breakdown point analysis and numerical studies, and they are regarded as one of the most challenging patterns of outliers for robust estimation of $V$ \cite{A98, MY95, T8614}. 
Although the MCD and SD-estimators possess high theoretical robustness, clustered contamination has been discussed in several applied and computational studies \cite{H01, JP01, PP01, PP07, WR94, RW01}. 
To address these issues, several approaches have been proposed \cite{PP01, PP07}, which include methods that conduct preprocessing prior to computing a robust estimator \cite{JP01}. 
Our proposed estimator, however, can deal with clustered outliers without requiring such preprocessing, provided an appropriately selected shrinkage coefficient. 

We examine the theoretical background of the proposed estimator within the frameworks of penalized $M$-estimation and Bayesian estimation, clarify the conditions for the existence of a solution, and then evaluate the robustness of the proposed estimator in terms of the AdBP. 
This study investigates how robustness is affected when a ridge type shrinkage, which differs from those used in methods such as RegTYL, LNSMI, and the recently proposed Hybrid $M$-Estimator (HME for short) \cite{TYN23}, is incorporated into the $M$-estimation of $V$. 
Focusing on the framework of penalized $M$-estimation, we numerically demonstrate that the proposed method performs comparably to existing methods such as SCM, TME, LNSMI, and SSCM in situations where they operate effectively, while it outperforms them in the presence of clustered outliers. 
In addition, this paper provides a structured synthesis of the relationships among various classes of estimators within the frameworks of penalized $M$-estimation. 

This paper is organized as follows. 
In Section~\ref{preliminaries}, we precisely formulate the problem and describe the existing estimators. 
In Section~\ref{proposal}, we introduce the proposed estimator, explain its theoretical background, and discuss the conditions for the existence of estimators characterized by estimating equations. 
In Section~\ref{robustness}, we evaluate the robustness of the proposed estimator from the perspective of the AdBP. 
In Section~\ref{numerical}, we examine the performance of the proposed estimator under various conditions through numerical experiments. 
The proofs of mathematical statements are given in Section~\ref{sec:proof}. 
Finally, in Section~\ref{summary}, we summarize the contents of this paper. 
Details and additional simulation results are provided in the Appendix. 

\textit{Notations:}
Let $\mathbb{R}_{\geq 0} \coloneqq \{x \in \mathbb{R} : x \geq 0 \}$ and $ \mathbb{R}_{> 0} \coloneqq \{x \in \mathbb{R} : x > 0 \}$. 
For $I \in \mathbb{N}$, let $[I] \coloneqq \{1, \ldots, I\}$. 
Let $\lceil \cdot \rceil$ denote the ceiling function. 
Let $(\cdot)^{\top}$ denote the transpose of a matrix or vector. 
For a nonsingular matrix $A$, let $A^{-1}$ denote its inverse, and let $(A^{-1})^\top$ be abbreviated as $A^{-\top}$. 
For a matrix $A$, let $A^\dagger$ denote the Moore--Penrose pseudoinverse of $A$. 
For a square matrix $A$, let $\mathrm{Tr}[A]$ and $\det(A)$ denote the trace and determinant of $A$, respectively. 
For a matrix $A$, let $\|A\|_\mathrm{F}^{2} \coloneqq \mathrm{Tr}[AA^{\top}]$ denote the squared Frobenius norm of $A$. 
For a symmetric matrix $A$, let $\lambda_{j}(A)$ denote the $j$-th eigenvalue of $A$. 
For symmetric matrices $A$ and $B$, let $A \preceq B$ indicate that $B - A$ is a positive semidefinite symmetric matrix, which is known as the Löwner partial order. 
For $\bm{x} \in \mathbb{R}^p$, let $\|\bm{x}\|$ denote the Euclidean norm of $\bm{x}$. 
Let $\mathbb{S}^{p-1} = \{ \bm{x} \in \mathbb{R}^p : \| \bm{x} \| = 1 \}$ denote the unit sphere. 
Let $\Delta^{p-1} = \{ \bm{\eta} \in \mathbb{R}^p : 0 \leq \eta_j \, (j \in [p]), \, \sum_{j=1}^p \eta_j = 1 \}$ denote the $(p-1)$-dimensional unit simplex. 
Let $\mathcal{N}_p(\bm{0}, \Sigma)$ denote the $p$-dimensional normal distribution with mean vector $\bm{0}$ and covariance matrix $\Sigma$. 
Consider a distribution with mean vector $\bm{0}$ and covariance matrix $\Sigma \in \mathcal{S}_{++}^{p}$. 
For $\bm{x} \in \mathbb{R}^p$, $\bm{x}^\top \Sigma^{-1} \bm{x}$ is called the squared Mahalanobis distance of $\bm{x}$, and $\sqrt{\bm{x}^{\top} \Sigma^{-1} \bm{x}}$ is called the Mahalanobis distance of $\bm{x}$ with respect to $\Sigma$. 
Following the usual convention, for an estimator $\hat{\Sigma}$ of $\Sigma$, $\bm{x}^\top \hat{\Sigma}^{-1} \bm{x}$ is also called the squared Mahalanobis distance of $\bm{x}$ with respect to $\hat{\Sigma}$. 

\section{Preliminaries} \label{preliminaries} 
\subsection{Setup and notation for the probability space} 
Hereafter, we fix a single measurable space $(\Xi, \mathcal{F})$. 
Consider a probability space $(\Xi, \mathcal{F}, \mathcal{P})$ equipped with a probability measure $\mathcal{P}$ and a general measurable space $(S, \mathcal{A})$, and let $\bm{R}: (\Xi, \mathcal{F}) \to (S, \mathcal{A})$ be a random vector or random variable. 
Let us denote by $P_{\bm{R}}(\cdot)$ the measure on $(S, \mathcal{A})$ induced by $\bm{R}$, that is, $P_{\bm{R}}(\cdot) \coloneqq \mathcal{P}(\bm{R}^{-1}(\cdot))$. 
Moreover, for $\chi = \mathbb{R}^p$, we consider the observation space $(\chi, \mathcal{B}(\chi))$. 
For a random vector $\bm{X}: (\Xi, \mathcal{F}) \to (\chi, \mathcal{B}(\chi))$, we write $P_{\bm{X}}(\cdot)$ simply as $P(\cdot)$ when no confusion arises from the context, and define probabilities by $P(\text{condition}) \coloneqq P_{\bm{X}}({\bm{X} \in \chi : \text{condition}})$. 

Let us denote by $\hat{P}_n(\cdot)$ the empirical measure on the $(\chi, \mathcal{B}(\chi))$ based on an i.i.d.\ sample $\bm{X}_1, \dots, \bm{X}_n$ of size $n$. 
Let us denote by $\mathrm{E}_{P_{\bm{X}}}[\cdot]$ the expectation operator with respect to the probability measure $P_{\bm{X}}(\cdot)$, and, when no confusion arises from the context, simply by $\mathrm{E}[\cdot]$. 
Let $\Theta \coloneqq \mathcal{S}_{++}^p$ denote the parameter space of interest, and, unless stated otherwise. 
We consider the Euclidean topology on both the observation space and the parameter space. 

\subsection{Elliptical distributions}
When a continuous random vector $\bm{X} \in \mathbb{R}^{p}$ has a Probability Density Function (PDF for short) of the form 
\begin{equation*}
    f(\bm{x}) = C_{p,g} \det(V^{-1}) \, g(\bm{x}^{\top}V^{-1}\bm{x}) \quad (\bm{x} \in \mathbb{R}^p), 
\end{equation*}
we say that $\bm{X}$ follows a (centered) elliptical distribution, and denote it by $\bm{X} \sim \textnormal{ES}_p(\bm{0}, V, g(\cdot))$. 
The function, $g: \mathbb{R}_{\geq 0} \to \mathbb{R}_{> 0}$ is a nonincreasing function, referred to as the density generator. 
Moreover, $C_{p,g} > 0$ is a normalizing constant depending on $p$ and $g(\cdot)$, and $V \in \mathcal{S}_{++}^p$ is referred to as the scatter matrix or shape matrix. 
An elliptical distribution is characterized by the property that the contours of its PDF are ellipsoidal. 
For example, the multivariate normal, the multivariate $t$, and the multivariate Cauchy distributions are included in the class of elliptical distributions. 
For an elliptical distribution with a finite covariance matrix $\Sigma \in \mathcal{S}_{++}^p$, there exists a constant $c \in \mathbb{R}_{>0}$ such that $\Sigma = c V$. 
Therefore, the scatter matrix $V$ is also referred to as the covariance structure. 

\subsection{Formulation of the problem}
Consider the statistical model $(\Xi, \mathcal{F}, \{\mathcal{P}_\theta\}_{\theta \in \Theta} )$, where $\mathcal{P}_\theta$ denotes the probability measure corresponding to the parameter $\theta \in \Theta$. 

We consider the case where the mean is known and an i.i.d.\ sample $\bm{X}_1, \ldots, \bm{X}_n$ of size $n$ is drawn from the population distribution $\textnormal{ES}_p(\bm{0}, V, g(\cdot))$. 
We assume that $n > p$, and treat the density generator $g(\cdot)$ as an unknown but fixed function. 
Let $\bm{x}_1, \ldots, \bm{x}_n$ denote realizations of $\bm{X}_1, \ldots, \bm{X}_n$. 
We then consider the problem of estimating the scatter matrix $V \in \Theta$ based on the observed data ${\bm{x}_1, \ldots, \bm{x}_n}$. 

\subsection[Penalized M-estimation]{Penalized $M$-estimation} \label{penalizedM}
The MLE of the scatter matrix $V$ is obtained by minimizing 
\begin{equation}\label{eq:lf}
    \mathscr{L}(V) \coloneqq \frac{1}{n} \sum_{i=1}^{n} \rho^{*}\left(\bm{x}_i^{\top} V^{-1} \bm{x}_i\right) - \log \, (\det (V))^{-1} \quad (V \in \mathcal{S}_{++}^p)
\end{equation}
which is the negative log-likelihood divided by $n$. 
The function $\rho^{*}(s) \coloneqq - \log g(s)$ is defined for $s \geq 0$. 

Now, in \eqref{eq:lf}, let us replace $\rho^{*}(\cdot)$ with a \textit{general} function $\rho: \mathbb{R}_{\geq 0} \to \mathbb{R}$, which is not necessarily related to the density generator $g(\cdot)$. 
We then consider
\begin{equation}\label{eq:mf}
\mathscr{M}(V) \coloneqq \frac{1}{n}\sum_{i=1}^{n} \rho\left(\bm{x}_i^{\top}V^{-1}\bm{x}_i\right)-\log \, (\det (V))^{-1} \quad (V \in \mathcal{S}_{++}^p)
\end{equation}
and seek $V \in \mathcal{S}_{++}^p$ that minimizes the objective function~\eqref{eq:mf}. 
In this case, the minimizer $V$ of \eqref{eq:mf} is an $M$-estimator \cite{M76, OT14, T87}. 
For example, TME corresponds to the choice $\rho(s) = p \log s$ ($s > 0$) in \eqref{eq:mf} \cite{T87}. 
In this paper, no assumptions are imposed on the function $\rho(\cdot)$ itself. 
Instead, assumptions are placed on the function corresponding to its first derivative, as stated in Condition~\ref{conm}. 

The optimization of the objective function obtained by adding a penalty term
\begin{equation}\label{eq:pmf}
\mathscr{M}_{\alpha}(V) \coloneqq \frac{1}{n}\sum_{i=1}^n \rho\left(\bm{x}_i^{\top}V^{-1}\bm{x}_i\right)-\log \, (\det (V))^{-1} + \mathcal{P}(V ; \, \alpha) \quad (V \in \mathcal{S}_{++}^p) 
\end{equation}
has been studied \cite{DT16, FHT08, OT14, TYN23, W08, W12a}. 
This framework is referred to as penalized $M$-estimation, and the parameter $\alpha$ is called the penalty or shrinkage coefficient. 
For example, LNSMI and RegTYL correspond to the estimation obtained from \eqref{eq:pmf} with $\rho(s) = p \, \log \, s$ ($s > 0$) and $\mathcal{P}(V ; \, \alpha) = \alpha \mathrm{Tr}[V^{-1}]$ \cite{OT14}. 

\subsection{Characterization of the estimator} 
We assume that the loss function $\rho(\cdot)$ is differentiable, and define $w(\cdot) \coloneqq \rho'(\cdot)$. 
In this case, a stationary point of \eqref{eq:mf} is a solution to the estimating equation 
\begin{equation}\label{eseq:m}
  V = \frac{1}{n} \sum_{i=1}^n w(\bm{x}_i^\top V^{-1} \bm{x}_i) \bm{x}_i \bm{x}_i^\top \textnormal{.}
\end{equation}
In \eqref{eq:pmf}, when $\mathcal{P}(V ; \, \alpha) = \alpha \mathrm{Tr}[V^{-1}]$, a stationary point is a solution to the estimating equation
\begin{equation}\label{eseq:penam}
  V = \frac{1}{n} \sum_{i=1}^n w(\bm{x}_i^\top V^{-1} \bm{x}_i) \bm{x}_i \bm{x}_i^\top + \alpha I_p \textnormal{.}
\end{equation}
A solution satisfying the estimating equation~(\ref{eseq:m}) is linearly equivariant (see Remark \ref{rem:le}), whereas a solution satisfying (\ref{eseq:penam}) is not, and this difference affects the robustness of the estimator \cite{D87}. 
In \eqref{eq:pmf}, when the Kullback--Leibler type penalty 
\begin{align*}
    \mathcal{P}(V ; \alpha) &= \textnormal{KL}(\mathcal{N}_p(\bm{0}, V) \Vert \mathcal{N}_p(\bm{0}, I_p)) \\
    &= \alpha \{ \mathrm{Tr}[V^{-1}] + \log \det (V) \}
\end{align*}
with $\alpha \in (0,\infty)$ is considered, the estimating equation is obtained as 
\begin{equation}\label{eseq:penam2}
  V = \frac{1}{1 + \alpha} \cdot \frac{1}{n} \sum_{i=1}^n \frac{\bm{x}_i \bm{x}_i^\top}{\bm{x}_i^\top V^{-1} \bm{x}_i} + \frac{\alpha}{1 + \alpha} I_p \textnormal{,}
\end{equation}
where $\mathrm{KL}(\mu \Vert \nu)$ is the Kullback--Leibler divergence of a distribution $\mu$ from a distribution $\nu$ \cite{vH14}. 

The $M$-estimator of $V$ is often defined through estimating equations such as (\ref{eseq:m}), (\ref{eseq:penam}), and (\ref{eseq:penam2}),
and computed using methods based on fixed-point algorithms for the corresponding estimating equations. 
Estimators characterized as solutions to the estimating equations~(\ref{eseq:m}), (\ref{eseq:penam}), and (\ref{eseq:penam2}) do not necessarily possess scale equivariance. 
See Remark~\ref{rem:le} for the definition of scale equivariance. 

Various assumptions on the loss function $\rho(\cdot)$ and the weight function $w(\cdot)$ have been considered \cite{KT91, M76, OT14, ZWG13}. 
Following \cite{M76}, we impose the conditions (A)--(E) below on the weight function $w: \mathbb{R}_{\geq 0} \to \mathbb{R}_{\geq 0}$, and also impose the condition (F) below. 

\begin{con}\label{conm}
    \textbf{}

    \begin{enumerate}[label= (\Alph*)]
        \item The weight function $w: \mathbb{R}_{\geq 0} \to \mathbb{R}_{\geq 0}$ is continuous and nonincreasing. 
        \item For $\psi(s) \coloneqq s w(s)$ ($s \geq 0$), it holds that $\kappa \coloneqq \sup_{s \ge 0} \{\psi(s)\} < \infty$. 
        \item The function $\psi: \mathbb{R}_{\geq 0} \to \mathbb{R}_{\geq 0}$ is nondecreasing. 
        \item The function $\psi: \mathbb{R}_{\geq 0} \to \mathbb{R}_{\geq 0}$ is strictly increasing in $\{s \in \mathbb{R}_{\geq 0} : \psi(s) < \kappa \}$. 
        \item It holds that $\kappa > p$. 
        \item For any $(p-1)$-dimensional linear subspace $H \subset \mathbb{R}^p$, it holds that $P(H) < 1 - p / \kappa$. 
    \end{enumerate}
\end{con}

\begin{remark}
Condition~\ref{conm}~(A)--(D) is satisfied for
  Huber-type weight functions, 
  \begin{equation*}
    w(s) =
    \left\{
      \begin{array}{ll}
      1 & \quad (0 \leq s \leq c), \\
      c/s & \quad (s > c), 
  \end{array}
  \right.
  \end{equation*}
  for $c > 0$, the $t$-distribution type weight functions $w(s) = (p + \nu) / (s + \nu)$ ($s \geq 0$) for $\nu \in \mathbb{N}$, and the Tyler type weight function $w(s) = p / s$ ($s \geq 0$). 
\end{remark}

\subsection{Additive finite sample breakdown point} \label{ssec:AdBP} 
The breakdown point is one of the measures of an estimator's robustness against contamination by outliers in the data \cite{HR09, M76, MMYS17, T8614, TYN23}. 
Several definitions of the breakdown point exist, which can be broadly classified into those in the asymptotic setting and those in the finite sample setting. 
In this paper, we focus on the breakdown point in the finite sample setting. 

We consider a situation in which $n$ data points $\mathrm{X} = \{\boldsymbol{x}_1, \ldots, \boldsymbol{x}_n \}$ are observed, together with arbitrary $m$ data points $\mathrm{Y} = \{ \boldsymbol{y}_1, \ldots, \boldsymbol{y}_m \}$. 
Then, the contaminated dataset is defined as $\mathrm{Z} \coloneqq \mathrm{X} \cup \mathrm{Y}$. 
Typically, $\mathrm{Y}$ corresponds to the outlier observations. 
Moreover, let an estimator of $\theta$ be denoted by $\hat{\theta}(\cdot)$, and the estimate of $\theta$ computed from a dataset $\mathrm{W}$ be denoted by $\hat{\theta}(\mathrm{W})$. 
Furthermore, let $B: \Theta^2 \to \mathbb{R}$ denote a function measuring the discrepancy between $\hat{\theta}(\mathrm{X})$ and $\hat{\theta}(\mathrm{Z})$. 
A notion of the function $B(\cdot, \cdot)$, called the discrepancy function, will be explained later. 

For the outlier ratio
\[
    \varepsilon_m \coloneqq \frac{m}{n+m}, 
\] 
a fixed estimator $\hat{\theta}(\cdot)$, and a fixed dataset $\mathrm{X}$, we define 
\begin{equation*}
  b \left(\varepsilon_m ; \, \hat{\theta}(\cdot), \mathrm{X} \right) \coloneqq
    \left\{
      \begin{array}{ll}
      \displaystyle \sup_\mathrm{Y} B \left(\hat{\theta} (\mathrm{X}), \hat{\theta}(\mathrm{Z}) \right) & \quad (\mathrm{Z} \in \mathcal{E}_m(\mathrm{X}) ), \\
      +\infty & \quad (\mathrm{Z} \notin \mathcal{E}_m(\mathrm{X}) ),
\end{array}
\right.
\end{equation*}
where $\mathcal{E}_m(\mathrm{X}) = \{ \mathrm{Z} = \mathrm{X} \cup \mathrm{Y} : \, \hat{\theta}(\mathrm{Z}) \textnormal{ exists in } \Theta \, (\mathrm{X}: \text{fixed})\}$. 

Let an estimator of $V$ be denoted by $\hat{V}(\cdot)$, and the estimate of $V$ computed from a dataset $\mathrm{W}$ be denoted by $\hat{V}(\mathrm{W})$. 
We say that the estimator $\hat{V}(\cdot)$ breaks down with respect to the dataset $\mathrm{X}$ if $b \left(\varepsilon_m ; \, \hat{V}(\cdot), \mathrm{X} \right) = + \infty$. 
The additive finite sample breakdown point (AdBP) of the estimator $\hat{V}(\cdot)$ is defined by 
\begin{equation*}
  \varepsilon^* \left( \hat{V}(\cdot), \mathrm{X} \right) \coloneqq \min \left\{ \varepsilon_m ; \, b \left(\varepsilon_m ; \, \hat{V}(\cdot), \mathrm{X} \right) = \infty \right\}. 
\end{equation*}

We consider a discrepancy function $B(V_1, V_2)$ that satisfies
\[
    B \left( V_1, V_2 \right) \to \infty \quad \text{as } \lambda_p(V_2) \to 0 \ \text{or}\ \lambda_1(V_2) \to \infty.
\]
Examples of such discrepancy functions include $\mathrm{Tr}[V_1 V_2^{-1} + V_1^{-1} V_2]$ and $\| \log (V_1^{-1/2} V_2 V_1^{-1/2}) \|_{\mathrm{F}}$ \cite{T8614, TYN23}. 
Here, for $X \in \mathcal{S}_{++}^p$, we define
\[
    \log X \coloneqq Q \, \mathrm{diag} \left( \log (\lambda_1(X)), \dots, \log (\lambda_p(X)) \right) \, Q^\top, 
\]
where $Q \, \mathrm{diag} \left( \lambda_1(X), \dots, \lambda_p(X) \right) \, Q^\top$ is the eigenvalue decomposition of $X$ with an orthogonal matrix $Q$. 

\begin{remark} \label{rem:le}
  In this remark, let $\mathrm{X}$ be a set of arbitrary $p$-dimensional vectors $\bm{x}_1,\ldots,\bm{x}_n$. 
  Let us denote by $A\mathrm{X}$ the dataset obtained by transforming each data point of $\mathrm{X}$ with a $p \times p$ nonsingular matrix $A$, i.e., $\bm{x}_i \mapsto A \bm{x}_i$ for $i \in [n]$. 
  An estimator $\hat{V}(\cdot)$ is said to be linearly equivariant if, for any $p \times p$ nonsingular matrix $A$, the relation $\hat{V}(A\mathrm{X}) = A \hat{V}(\mathrm{X}) A^\top$ holds, whenever $\hat{V}(\mathrm{X})$ and $\hat{V}(A\mathrm{X})$ exist in $\mathcal{S}_{++}^p$. 
  Moreover, an estimator $\hat{V}(\cdot)$ is said to be scale equivariant if, for any nonzero scalar $a$, the relation $\hat{V}(a \mathrm{X}) = a^2 \hat{V}(\mathrm{X})$ holds. 
\end{remark} 

\section{Proposed method} \label{proposal}
\subsection{Limitations of existing methods}
We examine the performance of TME and LNSMI through numerical experiments under two types of outlier configurations. 
We observe that when outliers form clusters, the performance of TME and LNSMI deteriorates significantly for a sample size of $N \coloneqq n + m = 100$ and an outlier ratio of $3\%$, when the data dimension exceeds $p \approx 35$. 
See Section~\ref{numerical} for the detailed settings and Figure~\ref{fig:r1}-(b, d, f) for the results. 
Moreover, this phenomenon is found to be closely related to the breakdown point of TME. 
Based on the results of the numerical experiments, we conjecture that this phenomenon arises from the inaccurate estimation of the precision structure; see Appendix~A. 
To address this issue, we propose an algorithm that applies a more direct treatment of the precision structure compared with LNSMI. 

\subsection{Numerical algorithm of the proposed method}
Given an initial value $V_0 \in \mathcal{S}_{++}^p$ and a penalty parameter $\alpha \in [0,1]$, we propose the following iterative algorithm: 
\begin{align*}
  \Omega_{k+1} &\leftarrow (1 - \alpha) V_k^{-1} + \alpha I_p, \\
  \tilde{V}_{k+1} &\leftarrow \frac{1}{n}\sum_{i=1}^n w(\bm{x}_i^\top \Omega_{k+1} \bm{x}_i) \bm{x}_i \bm{x}_i^\top, \\
  V_{k+1} &\leftarrow \frac{p}{\mathrm{Tr}[\tilde{V}_{k+1}]} \tilde{V}_{k+1}. 
\end{align*}
The quantities $V_k$ and $\Omega_k$ denote the estimated covariance structure and precision structure, respectively, computed at the $k$-th iteration. 
The specific procedure for determining the value of the penalty parameter is described in Subsection~\ref{coef}. 

The proposed algorithm iterates the following three steps until a prescribed stopping criterion is satisfied. 
First, we estimate $\Omega$ by applying a shrinkage step of the same form as that used in the EM-type estimator, given $V_k$. 
Next, we update $V_{k+1}$ in an $M$-estimation manner, using the resulting $\Omega_{k+1}$. 
Then, we perform trace normalization so that $\mathrm{Tr}[V_{k+1}] = p$ holds. 
For the determination of the shrinkage coefficient and the stopping criterion, see Remark~\ref{rem:tc} and Subsection~\ref{coef}. 

While LNSMI and RegTYL shrink the estimated covariance structure toward $I_p$ at each iteration, the proposed method shrinks the estimated precision structure toward $I_p$ at each iteration. 
In addition, the proposed method differs from HME in the way it shrinks the precision structure toward $I_p$ and in whether trace normalization is applied. 

\subsection[Interpretation of the proposed method as a penalized M-estimator]{Interpretation of the proposed method as a penalized $M$-estimator}
We investigate the relationship between the proposed method and the framework of penalized $M$-estimation. 
In this subsection, we focus on the case where $\mathcal{P}(V ; \, \alpha) = - \log \left|\det \left(I_{p} - \alpha V \right)\right|$ in \eqref{eq:pmf}, and consider the framework in which we minimize
\begin{equation}\label{eq:pmf20}
    \mathscr{M}_{\alpha}^{\textnormal{(P)}}(\Omega) \coloneqq \frac{1}{n}\sum_{i=1}^{n} \rho(\bm{x}_{i}^{\top} \Omega \bm{x}_{i}) - \log \, \det (\Omega) - \log \left|\det \left(I_{p} - \alpha \Omega^{-1}\right)\right| \quad (\Omega \in \mathcal{S}_{++}^p) 
\end{equation}
as a function of $\Omega = V^{-1}$. 
The stationary point of the objective function \eqref{eq:pmf20} satisfies 
\begin{equation}\label{eseqpro0}
\Omega = \left( \frac{1}{n}\sum_{i=1}^{n} w(\bm{x}_i^\top \Omega \bm{x}_i) \bm{x}_i \bm{x}_i^\top \right)^{-1} + \alpha I_{p}
\end{equation}
when $\det (I_p - \alpha \Omega^{-1}) \neq 0$. 
Equation~(\ref{eseqpro0}) represents the relation formally satisfied by the convergence point of the proposed algorithm for an appropriate choice of $\alpha$ and $w(\cdot)$. 
Furthermore, from the derivation of \eqref{eseqpro0}, it follows that $I_p - \alpha \Omega^{-1} \in \mathcal{S}_{++}^p$. 

Moreover, in \eqref{eq:pmf20}, if we set $w(s) = 1 / \beta$ $(s \geq 0)$ for $\beta > 0$, the stationary point is given by
\begin{equation}\label{eq:EM}
\hat{\Omega} = \beta S^{-1} + \alpha I_p \text{,}
\end{equation}
where $S$ denotes the SCM. 
The class of estimators in the form of \eqref{eq:EM} corresponds to the well-known class of EM estimators for precision matrix estimation \cite{EM76}. 

Under the constraint $\mathrm{Tr}[V] = c$ for a constant $c$, the term $- \log \left|\det \left(I_{p} - \alpha V \right)\right|$ is minimized when $V = (c / p) I_p$. 
This result is consistent with the shrinkage direction employed in the proposed algorithm. 

In \eqref{eq:pmf}, when the R\'enyi type penalty 
\begin{align*}
    \mathcal{P}(V ; \alpha) &= \frac{2(\alpha -1)}{\alpha} \textnormal{R}_{\alpha}(\mathcal{N}_p(\bm{0}, V) \Vert \mathcal{N}_p(\bm{0}, I_p)) \\
    &= -\frac{1}{\alpha} \log \left( \det(V)^{\alpha - 1} \det\left( \alpha I_p + (1-\alpha) V \right) \right), 
\end{align*}
with $\alpha \in (0,\infty)$ is considered, the estimating equation is obtained as 
\begin{equation*}
  \Omega = \frac{1}{\alpha} \left\{ \frac{1}{n} \sum_{i=1}^n w(\bm{x}_i^\top \Omega \bm{x}_i) \bm{x}_i \bm{x}_i^\top \right\}^{-1} + \frac{\alpha - 1}{\alpha} I_p 
\end{equation*}
where $\textnormal{R}_{\alpha}(\mu \Vert \nu)$ is the R\'enyi divergence of a distribution $\mu$ from a distribution $\nu$ \cite{vH14}. 

\subsection{Characterization of the proposed method by its estimating equations}\label{eseqpro}
By substituting $V = (\Omega - \alpha I_p)^{-1}$ into the estimating equation~(\ref{eseqpro0}), we obtain the estimating equation 
\begin{equation} \label{eseq:pro1}
  V = \frac{1}{n} \sum_{i=1}^n w\left( \bm{x}_i^\top \left\{V^{-1} + \alpha I_p \right\} \bm{x}_i \right) \bm{x}_i \bm{x}_i^\top \text{.}
\end{equation} 
Alternatively, substituting $V^{-1} = (\Omega - \alpha I_p) / (1 - \alpha)$ into the estimating equation~(\ref{eseqpro0}) leads to the more direct form
\begin{equation} \label{eseq:pro2}
  V = \frac{1}{n} \sum_{i=1}^n w\left( \bm{x}_i^\top \left\{(1-\alpha) V^{-1} + \alpha I_p \right\} \bm{x}_i \right) \bm{x}_i \bm{x}_i^\top \text{.}
\end{equation}
The proposed method can be characterized by the estimating equation~(\ref{eseqpro0}) in terms of $\Omega$ or by (\ref{eseq:pro1}) and (\ref{eseq:pro2}) in terms of $V$.  

When $\alpha \searrow 0$, the estimating equation~(\ref{eseq:pro2}) coincides with~(\ref{eseq:m}), namely, the ordinary $M$-estimator. 
On the other hand, when $\alpha = 1$, the resulting estimator corresponds to the class of GSSCMs \cite{RR19}. 
In particular, when $\alpha = 1$ and $w(s) = 1 / s$ ($s > 0$), the SSCM is obtained as a special case. 

The estimators characterized by the estimating equations~(\ref{eseq:pro1}) and~(\ref{eseq:pro2}) are not linearly equivariant, but they are orthogonally equivariant. 
That is, for any $p \times p$ orthogonal matrix $Q$, whenever $\hat{V}(\mathrm{X})$ and $\hat{V}(Q\mathrm{X})$ exist in $\mathcal{S}_{++}^p$, the relation $\hat{V}(Q\mathrm{X}) = Q \hat{V}(\mathrm{X}) Q^\top$ holds. 

\subsection{Interpretation of the proposed method as a Bayesian estimator}
We discuss the relationship between the proposed method and the maximum a posteriori estimation (MAP for short). 
Let $\bm{D} = (D_1, D_2, \ldots, D_p)^{\top}$ be a random vector taking values in $\Delta^{p-1}$, and suppose that $\bm{D} \sim \textnormal{Dir}_{p}(\gamma_1)$.
The notation $\textnormal{Dir}_{p}(\gamma_1)$ denotes the $p$-dimensional symmetric Dirichlet distribution with parameter $\gamma_1 > 0$. 
For $\gamma_2 > 0$, the random vector $\bm{\Lambda} = (\Lambda_1, \Lambda_2, \ldots, \Lambda_p)^{\top} \in \mathbb{R}^p$ obtained through the linear transformation
\begin{equation*}
    \Lambda_j = -\frac{1}{\gamma_2} D_j + \frac{1}{\gamma_2} \qquad (j \in [p])
\end{equation*}
has the PDF
\begin{equation*}
  f_{\bm{\Lambda}}(\bm{\lambda}; \gamma_1, \gamma_2) = \frac{\Gamma\left(p \gamma_1 \right)}{\left(\Gamma(\gamma_1)\right)^p}  \cdot \gamma_2^p \cdot \prod_{k=1}^p \left| 1 - \gamma_2 \lambda_k \right|^{\gamma_1 - 1} \quad \left( \bm{\lambda} = (\lambda_1, \ldots, \lambda_p)^{\top} \in \Delta_2 \right) \text{,}
\end{equation*}
where the support is $\Delta_2 \coloneqq \{\bm{\eta} \in \mathbb{R}^p : 0 \leq \eta_k \leq 1 / \gamma_2 \, (k \in [p]), \sum_k \eta_k = (p-1) / \gamma_2\}$. 

Let $P_{\bm{\Lambda}}^{(\gamma_1, \gamma_2)}$ denote the distribution induced by the probability density function $f_{\bm{\Lambda}}(\cdot ; \, \gamma_1, \gamma_2)$. 
We consider the hierarchical model given by
\begin{align*}
    &\bm{X} \, | \, V, \gamma_1, \gamma_2 \sim \textnormal{ES}_p(\bm{0}, V; g) \text{,}\\
    &\left(\lambda_1(V), \lambda_2(V), \ldots, \lambda_p(V)\right)^{\top} \, | \, \gamma_1, \gamma_2 \sim P_{\bm{\Lambda}}^{(\gamma_1, \gamma_2)} \text{.}
\end{align*}
That is, we place the prior $P_{\bm{\Lambda}}^{(\gamma_1, \gamma_2)}$ only on the eigenvalues of the scatter matrix $V$. 
Under this model, the negative of the log posterior density is given by
\begin{equation*}
    \sum_{i=1}^{n} \rho^{*} \left(\bm{x}_{i}^{\top} V^{-1} \bm{x}_{i}\right) - n \log \, (\det (V))^{-1} - \log \, \prod_{k=1}^p \left|1 - \gamma_2 \lambda_k(V) \right|^{\gamma_1 - 1} + \textnormal{(constant)} \text{.}
\end{equation*}
Focusing on the case $\gamma_1 = n+1$ and, as in Subsection~\ref{penalizedM}, allowing $\rho(\cdot)$ to be a \textit{general} function not necessarily related to the density generator $g(\cdot)$, the problem of maximizing the log posterior reduces to the minimization of the objective function~\eqref{eq:pmf20}. 

The condition $\gamma_2 > 0$ in the prior distribution $P_{\bm{\Lambda}}^{(n + 1, \gamma_2)}$ ensures that the largest eigenvalue of the scatter matrix $V$ remains finite.


\subsection[Existence in the positive definite cone]{Existence of the proposed estimator in $\mathcal{S}_{++}^p$}
In this subsection, we discuss the existence of $V \in \mathcal{S}_{++}^p$ that satisfies the following equations
\begin{equation}
  V = \mathrm{E}_{P_{\bm{X}}} \left[ w\left( \bm{X}^\top \left\{V^{-1} + \alpha I_p \right\} \bm{X} \right) \bm{X} \bm{X}^\top \right] \label{eseq:prex1} 
\end{equation} 
and
\begin{equation}
  V = \mathrm{E}_{P_{\bm{X}}} \left[ w\left( \bm{X}^\top \left\{ ( 1- \alpha) V^{-1} + \alpha I_p \right\} \bm{X} \right) \bm{X} \bm{X}^\top \right] \label{eseq:prex2}
\end{equation}
that are expressed in a more general form than the estimating equations~(\ref{eseq:pro1}) and~(\ref{eseq:pro2}). 

First, our next theorem establishes the existence of a solution $V \in \mathcal{S}_{++}^p$ to \eqref{eseq:prex1}. 

\begin{thm}\label{thm:prexcon1}
Let Condition~\ref{conm}-(A)--(F) hold.
Then there exists a solution to \eqref{eseq:prex1} in $\mathcal{S}_{++}^p$. 
\end{thm}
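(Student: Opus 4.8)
The plan is to prove existence not by directly analyzing the fixed-point map of \eqref{eseq:prex1}, but by exhibiting a minimizer of the associated population objective and then reading off the estimating equation from its first-order condition. Concretely, I would work in the precision parametrization $\Omega=V^{-1}+\alpha I_p$ and consider the population analogue of \eqref{eq:pmf20},
\[
F(\Omega):=\E_{P_{\bm{X}}}\bigl[\rho(\bm{X}^\top\Omega\bm{X})\bigr]-\log\det\Omega-\log\det\bigl(I_p-\alpha\Omega^{-1}\bigr),
\]
on the open domain $\mathcal{D}:=\{\Omega\in\mathcal{S}_{++}^p:\Omega\succ\alpha I_p\}$, which is exactly the set on which $I_p-\alpha\Omega^{-1}\in\mathcal{S}_{++}^p$ and on which $V=(\Omega-\alpha I_p)^{-1}$ is well defined in $\mathcal{S}_{++}^p$. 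A direct computation of $\nabla F$ recovers the stationarity condition \eqref{eseqpro0}, and substituting $V=(\Omega-\alpha I_p)^{-1}$ turns it into \eqref{eseq:prex1}; hence it suffices to produce an interior minimizer of $F$ on $\mathcal{D}$.

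Before minimizing, I would record the analytic facts that make $F$ tractable. Since $\psi(s)=sw(s)$ is nondecreasing and bounded by $\kappa$ (Condition~\ref{conm}-(B),(C)), it increases to its supremum, so $\psi(s)\to\kappa$ and $w(s)\sim\kappa/s$ as $s\to\infty$; consequently $\rho(s)/\log s\to\kappa$, i.e.\ $\rho(s)=\kappa\log s\,(1+o(1))$. To sidestep integrability issues, I would pass to the recentered functional $\tilde F(\Omega):=F(\Omega)-F(\Omega_0)$ for a fixed $\Omega_0\in\mathcal{D}$: by concavity of $\rho$ together with the monotonicity of $\psi$, the differences $\rho(\bm{X}^\top\Omega\bm{X})-\rho(\bm{X}^\top\Omega_0\bm{X})$ are uniformly bounded by a multiple of the logarithms of the condition numbers of $\Omega$ and $\Omega_0$, so $\tilde F$ is finite-valued and, by dominated convergence, continuous on $\mathcal{D}$; differentiability under the expectation is justified the same way.

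The heart of the argument is coercivity of $\tilde F$ on $\mathcal{D}$: I would show $\tilde F(\Omega)\to+\infty$ whenever $\lambda_{\min}(\Omega)\to\alpha$ or $\lambda_{\max}(\Omega)\to\infty$. Writing $\Omega=\sum_j\ell_j u_j u_j^\top$ with $\ell_1\ge\cdots\ge\ell_p>\alpha$, the penalty contributes $-\log\det(I_p-\alpha\Omega^{-1})=\sum_j[-\log(1-\alpha/\ell_j)]\ge 0$, which already forces $\tilde F\to+\infty$ along the piece $\ell_p\downarrow\alpha$, so only $\lambda_{\max}\to\infty$ remains. There the mechanism is a competition between $\E[\rho(\bm{X}^\top\Omega\bm{X})]$, which grows like $\kappa\log\ell_j$ in each diverging direction carrying mass, and $-\log\det\Omega=-\sum_j\log\ell_j$. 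In the cleanest case — a single scale $\ell_1=\cdots=\ell_r=t\to\infty$ with the rest bounded and $L_r:=\mathrm{span}(u_1,\dots,u_r)$ fixed — one obtains, for small $\epsilon>0$,
\[
\E[\rho(\bm{X}^\top\Omega\bm{X})]-\log\det\Omega\;\ge\;\bigl[(\kappa-\epsilon)(1-\pi_r)-r\bigr]\log t+O(1),\qquad\pi_r:=P(\bm{X}\in L_r^\perp).
\]
For $r=p$ (so $L_r^\perp=\{\bm{0}\}$) positivity of the bracket is Condition~\ref{conm}-(E); for $1\le r\le p-1$ the subspace $L_r^\perp$ has dimension at least $1$, hence sits inside a hyperplane $H$, and Condition~\ref{conm}-(F) gives $\pi_r\le P(H)<1-p/\kappa\le 1-r/\kappa$, i.e.\ $\kappa(1-\pi_r)>r$. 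The monotonicity $1-p/\kappa\le 1-r/\kappa$ is precisely what lets a single hyperplane bound cover every proper subspace, which is why (F) is stated only for hyperplanes but with the constant $1-p/\kappa$.

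The step I expect to be the real obstacle is upgrading this single-scale heuristic to arbitrary approaches to $\partial\mathcal{D}$. Two difficulties must be faced at once. First, the diverging eigenvalues may blow up at genuinely different rates ($\ell_1\gg\ell_2\gg\cdots$), so $-\log\det\Omega$ cannot be compared to a single $\log t$; the correct bookkeeping is to run the estimate along the eigenvalue flag $L_1\subset L_2\subset\cdots\subset L_p$, peeling off one scale at a time and invoking (F) for each intermediate subspace $L_j^\perp$. Second, the $O(1)$ remainder secretly contains terms like $\E[\log\|P_{L_r}\bm{X}\|^2\,;\,P_{L_r}\bm{X}\neq\bm{0}]$, which can be $-\infty$ if $\bm{X}$ concentrates near $L_r^\perp$; I would control this not by the pointwise bound $\bm{X}^\top\Omega\bm{X}\ge\ell_r\|P_{L_r}\bm{X}\|^2$ alone but by combining it with the floor $\bm{X}^\top\Omega\bm{X}\ge\alpha\|\bm{X}\|^2$ and truncating on $\{\|P_{L_r}\bm{X}\|^2\ge\delta\}$, then letting $\delta\downarrow 0$ after $t\to\infty$ so that $P(\|P_{L_r}\bm{X}\|^2\ge\delta)\uparrow 1-\pi_r$. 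Finally one extracts convergent subsequences of the eigenvectors so that the limiting subspaces are well defined. Granting coercivity, continuity of $\tilde F$ on the open set $\mathcal{D}$ yields an interior minimizer $\Omega^\ast\succ\alpha I_p$; its stationarity is \eqref{eseqpro0}, and $V^\ast=(\Omega^\ast-\alpha I_p)^{-1}\in\mathcal{S}_{++}^p$ solves \eqref{eseq:prex1}, as required.
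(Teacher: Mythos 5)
Your route is genuinely different from the paper's and is, in outline, a viable one. The paper does not minimize anything: it runs a monotone fixed-point iteration in the spirit of Maronna, starting from $V_0=r_0I_p$ with $r_0$ chosen (via a small auxiliary lemma using (B)--(C)) so that $F_\alpha(V_0)\preceq V_0$, obtaining a Löwner-nonincreasing sequence $V_{k+1}=F_\alpha(V_k)$ whose limit exists, and then ruling out degeneracy of the limit by a Fatou argument that produces the inequality $p\geq\kappa(1-P(H))$, contradicting (F). Your variational argument buys a cleaner conceptual picture (the solution is an interior minimizer of the penalized objective on $\{\Omega\succ\alpha I_p\}$, and conditions (E)--(F) appear exactly as the coercivity constants), and it dispenses with condition (D), which the paper needs only to show $P(w(\cdot)=0)=0$ along the iteration; the paper's argument buys a constructive scheme that mirrors the actual algorithm and avoids any integrability or differentiation-under-the-integral issues, since it never touches $\rho$ itself.

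That said, as written your proof is complete only up to the coercivity step, which you correctly identify as the crux and only sketch. To close it you must actually execute the telescoping over the eigenvalue flag: with $\ell_1\geq\cdots\geq\ell_p>\alpha$ and $L_j=\mathrm{span}(u_1,\dots,u_j)$, bound $\E[\rho(\bm{X}^\top\Omega\bm{X})]-\log\det\Omega$ from below by $\sum_{j=1}^{p}\bigl[(\kappa-\epsilon)P(\|P_{L_j}\bm{X}\|^2\geq\delta)-j\bigr](\log\ell_j-\log\ell_{j+1})+C(\delta,\epsilon)$ and choose $\delta,\epsilon$ uniformly over the (subsequentially convergent) frames so that every bracket is positive; this is where (F) enters for each $j\leq p-1$. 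Two smaller corrections: at $r=p$ you invoke (E), but if $P(\{\bm{0}\})>0$ the bracket is $(\kappa-\epsilon)(1-P(\{\bm{0}\}))-p$ and you need (F) (applied to any hyperplane containing the origin) rather than (E) to make it positive; and your case split "$\lambda_{\min}\to\alpha$ or $\lambda_{\max}\to\infty$" must be handled jointly, since along a sequence where both occur the term $-\log\det\Omega$ tends to $-\infty$ while the barrier tends to $+\infty$, so you should partition the spectrum into the directions collapsing to $\alpha$ (bounded contribution to $-\log\det\Omega$, divergent barrier) and the directions escaping to infinity (handled by the flag estimate) rather than treat the two boundary pieces as exclusive alternatives. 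Finally, note that your barrier argument degenerates when $\alpha=0$; the paper's iteration covers that case uniformly.
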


\begin{remark}
  The proof of Theorem~\ref{thm:prexcon1} shows that the sequence of covariance structures obtained from the iteration without the trace normalization step is monotonically nonincreasing with respect to the Löwner partial order. 
  This explains why the proposed algorithm incorporates a trace normalization step. 
  Without trace normalization, the scale of the estimated covariance structure (for example, its trace) may approach zero, which could make the shrinkage factor practically meaningless. 
  To avoid this issue, the proposed algorithm conducts trace normalization. 
\end{remark}

By considering the empirical measure in \eqref{eseq:prex1}, we obtain the following. 

\begin{cor}\label{cor:prexcon}
Let Condition~\ref{conm}-(A)--(D) and~(F) hold.
Then there exists a solution to the estimating equation~(\ref{eseq:pro1}) in $\mathcal{S}_{++}^p$, provided $\kappa > n(p-1) / (n-p)$.
\end{cor}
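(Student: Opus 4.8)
The plan is to read Corollary~\ref{cor:prexcon} as the empirical specialization of Theorem~\ref{thm:prexcon1}. Substituting the empirical measure $\hat{P}_n = \frac{1}{n}\sum_{i=1}^n \delta_{\bm{x}_i}$ for $P_{\bm{X}}$ in the population identity~\eqref{eseq:prex1} reproduces exactly the sample estimating equation~\eqref{eseq:pro1}; hence it suffices to check that $\hat{P}_n$ meets the hypotheses under which Theorem~\ref{thm:prexcon1} already guarantees a solution in $\mathcal{S}_{++}^p$, and the existence statement then follows verbatim. I would first make explicit that the only place the underlying measure enters Condition~\ref{conm} is part~(F): parts (A)--(D) are properties of the weight function $w(\cdot)$ alone, so they transfer to the sample problem unchanged.

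The substantive work is to establish the measure-dependent condition for $\hat{P}_n$. Although~(F) is assumed at the population level, applying Theorem~\ref{thm:prexcon1} to $\hat{P}_n$ requires the same non-concentration property for $\hat{P}_n$, i.e.\ a uniform control of $\hat{P}_n(L)$ — the fraction of sample points lying in $L$ — over proper subspaces $L$. The key combinatorial input is that the sample is in general position, which holds almost surely because the population $\mathrm{ES}_p(\bm{0}, V; g(\cdot))$ is continuous, and which is in any case necessary for a positive-definite solution to exist: when $n > p$, no hyperplane contains more than $p-1$ of the $\bm{x}_i$, and a $q$-dimensional subspace contains at most $q$ of them. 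I would use this to bound $\hat{P}_n(L)$ by its worst-case value over all proper subspaces and feed the result into the subspace condition.

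The main obstacle is converting this worst-case concentration into the precise scalar threshold $\kappa > n(p-1)/(n-p)$. Concretely, I would combine the general-position bound $\hat{P}_n(L) \le (\dim L)/n$ with the non-concentration inequality driving the existence argument of Theorem~\ref{thm:prexcon1}, written for each subspace dimension $q \in \{1,\dots,p-1\}$ (so that $q/n < 1 - q/\kappa$, i.e.\ $\kappa > qn/(n-q)$), and then optimize over $q$ to locate the binding subspace. Carrying out this optimization, while tracking the strict inequalities and the contribution of the extra $\alpha I_p$ term appearing inside $w(\cdot)$ in~\eqref{eseq:pro1}, is exactly what produces the stated constant and is the step demanding the most care. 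Finally, since the resulting bound forces $1 - p/\kappa > 0$, it guarantees that~(F) is nonvacuous and subsumes Condition~\ref{conm}-(E), $\kappa > p$, in the regime where that is binding; with (A)--(F) thereby verified for $\hat{P}_n$, Theorem~\ref{thm:prexcon1} applies and delivers a solution to~\eqref{eseq:pro1} in $\mathcal{S}_{++}^p$.
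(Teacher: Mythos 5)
Your overall strategy is exactly the paper's: specialize Theorem~\ref{thm:prexcon1} to the empirical measure, note that (A)--(D) concern only $w(\cdot)$, and verify the measure-dependent Condition~\ref{conm}-(F) for $\hat{P}_n$ via the almost-sure general position of a sample from a continuous distribution, which gives $\hat{P}_n(H)\leq (p-1)/n$ for every $(p-1)$-dimensional linear subspace $H$. That is the entire content of the paper's three-line proof, and your identification of this as ``the substantive work'' is correct.

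Where your write-up goes astray is in the final conversion to the scalar threshold. Condition~(F) only quantifies over $(p-1)$-dimensional subspaces and always compares against $1-p/\kappa$ with the \emph{ambient} dimension $p$ in the numerator; there is no family of inequalities $q/n<1-q/\kappa$ indexed by subspace dimension $q$ and no optimization over $q$ to perform. The correct requirement is the single inequality $(p-1)/n<1-p/\kappa$, which rearranges to $\kappa>np/(n-p+1)$ --- this is what the paper's proof derives (and, incidentally, it does not literally match the constant $n(p-1)/(n-p)$ displayed in the corollary statement, an inconsistency internal to the paper). Your version $q/n<1-q/\kappa$ at $q=p-1$ would instead yield $\kappa>n(p-1)/(n-p+1)$, which is strictly weaker and not sufficient for~(F). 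Also, the $\alpha I_p$ term inside $w(\cdot)$ plays no role whatsoever in this threshold --- the constant comes purely from the non-concentration condition --- so the ``care'' you anticipate needing there is misplaced. Your closing observation that the quantitative condition implies $\kappa>p$ (so that (E) need not be assumed separately) is correct and worth making explicit, since the corollary indeed drops~(E) from its hypotheses.
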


Moreover, as a necessary condition for the existence of a solution to the estimating equation~(\ref{eseq:pro1}) in $\mathcal{S}_{++}^p$, we obtain the following lemma. 

\begin{lem}\label{lem:prexnescon}
Let Condition~\ref{conm}-(B) and~(C) hold. 
If there exists a solution to the estimating equation~(\ref{eseq:pro1}) in $\mathcal{S}_{++}^p$, then 
  \begin{description}
    \item[(1)] $p \leq \kappa$;
    \item[(2)] $\hat{P}_n(\{\bm{0}\}) \leq 1 - p / \kappa$. 
  \end{description}
\end{lem}

Theorem~\ref{thm:prexcon1}, Corollary~\ref{cor:prexcon}, and Lemma~\ref{lem:prexnescon} can be derived in the same manner for \eqref{eseq:prex2} and \eqref{eseq:pro2}. 

\section{Robustness of the proposed estimator} \label{robustness}
In this section, we discuss the robustness of the estimator $\hat{V}(\cdot)$, characterized as a solution to the estimating equation~\eqref{eseq:pro1}, from the perspective of AdBP. 
Recall that $\varepsilon_m = m/(n+m)$ denotes the outlier ratio and that $\mathcal{E}_m(\mathrm{X}) = \{ \mathrm{Z} = \mathrm{X} \cup \mathrm{Y} : \, \hat{V}(\mathrm{Z}) \textnormal{ exists in } \Theta \, (\mathrm{X}: \text{fixed})\}$. 
The following lemma provides sufficient conditions for the existence or non-existence of $\hat{V}(\mathrm{Z})$
under contamination. 

\begin{lem}\label{exbp}
  \textbf{}

  \noindent
  \begin{description}
      \item[(1)] Let Condition~\ref{conm}-(A)--(D) and~(F) hold and suppose that $\kappa > n(p-1) / (n - p)$. 
      If $\varepsilon_m < 1 - p / \kappa - (p-1) / n$, then $\mathrm{Z} \in \mathcal{E}_m(\mathrm{X})$. 
      \item[(2)] Let Condition~\ref{conm}-(B) and~(C) hold and suppose that $\kappa > p$. 
      If $\varepsilon_m > 1 - p / \kappa$, then $\mathrm{Z} \notin \mathcal{E}_m(\mathrm{X})$. 
  \end{description}
\end{lem}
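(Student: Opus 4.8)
The plan is to derive both parts from the two previously established results for the proposed estimator, namely the existence Corollary~\ref{cor:prexcon} and the necessary-condition Lemma~\ref{thm:prexnescon}, applied to the contaminated dataset $\mathrm{Z} = \mathrm{X} \cup \mathrm{Y}$ of size $N \coloneqq n + m$. Throughout I read the estimating equation~\eqref{eseq:pro1} for $\mathrm{Z}$ as the one with empirical measure $\hat{P}_N$ (that is, $1/N$ in front of the sum over the $N$ points of $\mathrm{Z}$), and I make explicit the standard general-position assumption on $\mathrm{X}$, so that any $(p-1)$-dimensional linear subspace contains at most $p-1$ of the clean points.

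For part (1) I would verify the three hypotheses of Corollary~\ref{cor:prexcon} for the size-$N$ problem. Condition~\ref{conm}-(A)--(D) are properties of $w$ and carry over unchanged; the threshold $\kappa > N(p-1)/(N-p)$ follows from the assumed $\kappa > n(p-1)/(n-p)$ because $t \mapsto t(p-1)/(t-p)$ is decreasing on $t > p$ and $N > n$. The remaining task is Condition~\ref{conm}-(F) for $\hat{P}_N$: for any $(p-1)$-dimensional subspace $H$ I bound $\hat{P}_N(H) \le (p-1+m)/N$ by combining the general-position bound of $p-1$ clean points with at most $m$ contaminating points on $H$. Since $(p-1)/(n+m) < (p-1)/n$, this yields $\hat{P}_N(H) < (p-1)/n + \varepsilon_m$, which the hypothesis $\varepsilon_m < 1 - p/\kappa - (p-1)/n$ bounds by $1 - p/\kappa$. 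Hence Condition~\ref{conm}-(F) holds for $\hat{P}_N$ and Corollary~\ref{cor:prexcon} produces a solution in $\mathcal{S}_{++}^p$, i.e.\ $\mathrm{Z} \in \mathcal{E}_m(\mathrm{X})$. Because the subspace bound is uniform over all choices of $\mathrm{Y}$, this establishes existence for every admissible contamination.

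For part (2) I would instead exhibit one adversarial contamination violating the necessary condition. Taking $\mathrm{Y}$ to be $m$ copies of the origin, the contaminated empirical measure satisfies $\hat{P}_N(\{\bm{0}\}) \ge m/N = \varepsilon_m$ (any clean points at $\bm{0}$ only increase this). The hypothesis $\varepsilon_m > 1 - p/\kappa$ then gives $\hat{P}_N(\{\bm{0}\}) > 1 - p/\kappa$, contradicting the conclusion $\hat{P}_N(\{\bm{0}\}) \le 1 - p/\kappa$ of Lemma~\ref{thm:prexnescon}-(2), which holds under the assumed Condition~\ref{conm}-(B),(C). Hence no solution to~\eqref{eseq:pro1} exists for this $\mathrm{Z}$ in $\mathcal{S}_{++}^p$, so $\mathrm{Z} \notin \mathcal{E}_m(\mathrm{X})$.

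The threshold monotonicity in part (1) and the origin-mass computation in part (2) are routine; the only step needing care is the Condition~\ref{conm}-(F) verification, namely tracking the clean/contaminated split on a subspace and confirming that replacing the sharp $(p-1)/(n+m)$ by the cleaner $(p-1)/n$ preserves the strict inequality for \emph{every} $\mathrm{Y}$. I expect this bookkeeping, together with stating the general-position assumption on $\mathrm{X}$ explicitly, to be the main (though modest) obstacle; otherwise both parts reduce directly to the previously established existence and non-existence results.
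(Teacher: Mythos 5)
Your proposal is correct and follows essentially the same route as the paper: part (1) is the counting bound $\hat{P}_{n+m}(H) \le (p-1+m)/(n+m) < 1 - p/\kappa$ (using general position of the clean data) fed into Corollary~\ref{cor:prexcon}, and part (2) is the all-zeros contamination $\mathrm{Y} = \{\bm{0},\dots,\bm{0}\}$ violating the necessary condition of Lemma~\ref{thm:prexnescon}-(2). Your explicit check that the threshold $\kappa > N(p-1)/(N-p)$ is inherited from $\kappa > n(p-1)/(n-p)$ by monotonicity is a detail the paper leaves implicit, but it does not change the argument.
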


The next lemma provides sufficient conditions on $\varepsilon_m$
under which $\lambda_1(\hat{V}(\mathrm{Z}))$ and $1/\lambda_p(\hat{V}(\mathrm{Z}))$ are bounded above, provided that $\hat{V}(\mathrm{Z})$ exists. 

\begin{lem}\label{bdbp}
  \textbf{}

  \noindent
  \begin{description}
      \item[(1)] Let Condition~\ref{conm}-(A)--(C) hold. If $\varepsilon_m = 1$, then the set $\{ \lambda_1(\hat{V}(\mathrm{Z})) ; \, \mathrm{Z} \in \mathcal{E}_m(\mathrm{X}) \}$ is bounded above. 
      \item[(2)] Let Condition~\ref{conm}-(B),~(C),~and~(E) hold. If $\varepsilon_m < 1 - p / \kappa$, then the set $\{ 1 / \lambda_p(\hat{V}(\mathrm{Z})) ; \, \mathrm{Z} \in \mathcal{E}_m(\mathrm{X}) \}$ is bounded above. 
  \end{description}
\end{lem}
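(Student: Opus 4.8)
The plan is to bound the largest eigenvalue directly by a trace estimate, and to bound the smallest eigenvalue below by a compactness-and-contradiction argument driven by the collapsing eigenspace. For (1), write the contaminated sample as $\bm z_1,\dots,\bm z_N$ with $N=n+m$, so that a solution $V=\hat V(\mathrm Z)$ of \eqref{eseq:pro1} obeys the empirical identity $V=\tfrac1N\sum_{j=1}^N w(d_j)\bm z_j\bm z_j^\top$, where $d_j:=\bm z_j^\top\{V^{-1}+\alpha I_p\}\bm z_j$. Taking the trace yields $\Tr[V]=\tfrac1N\sum_{j=1}^N w(d_j)\|\bm z_j\|^2$. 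Because $V^{-1}\succ0$ we have $d_j\ge\alpha\|\bm z_j\|^2$, and Condition~\ref{conm}-(B) gives $w(d_j)d_j=\psi(d_j)\le\kappa$; hence for $\bm z_j\neq\bm 0$, $w(d_j)\|\bm z_j\|^2\le\kappa\|\bm z_j\|^2/d_j\le\kappa/\alpha$, while the term vanishes when $\bm z_j=\bm 0$. Therefore $\lambda_1(V)\le\Tr[V]\le\kappa/\alpha$, a bound independent of $\mathrm Z$ and of $m$; this is exactly the assertion, and in particular it persists in the extreme regime $\varepsilon_m=1$. Only the positivity of $\alpha$ and part~(B) are essential, parts (A) and (C) serving merely to guarantee that the solution in question is well defined.

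For (2) I would argue by contradiction. If $\{1/\lambda_p(\hat V(\mathrm Z))\}$ were unbounded there would be datasets $\mathrm Z^{(k)}\in\mathcal E_m(\mathrm X)$ with solutions $V_k$ and $\lambda_p(V_k)\to0$. By part~(1), $\lambda_1(V_k)\le\kappa/\alpha$, so the $V_k$ lie in a compact set; passing to a subsequence I may assume that exactly $q$ eigenvalues ($1\le q\le p$) tend to $0$ while the remaining ones stay bounded below by some $c>0$, and that the associated eigenspaces converge to a $q$-dimensional subspace $L$. Let $P_k$ denote the orthogonal projection onto the span of the $q$ collapsing eigenvectors of $V_k$, so that $P_k$ commutes with $V_k$ and $\Tr[P_k]=q$. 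Pairing the estimating equation with $V_k^{-1}P_k$ produces the identity
\[
 q=\Tr[V_k^{-1}P_kV_k]=\frac1N\sum_{j=1}^N w(d_j)\,e_j^{(k)},\qquad e_j^{(k)}:=\bm z_j^\top V_k^{-1}P_k\bm z_j\ge0,
\]
which is the engine of the proof: it replaces the coarser factor $p$ coming from $\Tr[V_k^{-1}V_k]=p$ by the sharper factor $q$.

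I would then analyse the summands. Every original point $\bm x_i$ having a nonzero component along $L$ satisfies $e_i^{(k)}\to\infty$; since such a point has bounded norm, $d_i-e_i^{(k)}$ stays bounded, so $e_i^{(k)}/d_i\to1$, and because $\psi$ is nondecreasing with supremum $\kappa$ (Conditions (B),(C)) one gets $w(d_i)e_i^{(k)}=\psi(d_i)\,(e_i^{(k)}/d_i)\to\kappa$. Using that the fixed data $\mathrm X$ are in general position (so no $(p-q)$-dimensional subspace contains more than $p-q$ of them), the subspace $L^\perp$ of dimension $p-q$ holds at most $p-q$ of the $\bm x_i$, hence at least $n-(p-q)$ original points escape in this sense. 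Discarding the remaining nonnegative terms of the identity gives, in the limit, $q\ge\tfrac1N(n-p+q)\kappa$, i.e.\ $q(n+m)\ge(n-p+q)\kappa$. On the other hand $\varepsilon_m<1-p/\kappa$ means $n/N>p/\kappa$, whence $qN<qn\kappa/p\le(n-p+q)\kappa$, the last step being $q\le p$; this contradicts the displayed inequality. Condition~(E) ($\kappa>p$) is what makes $1-p/\kappa>0$ and the bound nonvacuous, and the worst case is total collapse $q=p$, in which $L^\perp=\{\bm 0\}$ and the mass is forced onto the origin.

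The main obstacle is the limiting weight analysis of the third paragraph: one must justify the escaping/non-escaping dichotomy rigorously, controlling the indeterminate contributions of points lying asymptotically in $L^\perp$ (the $0/0$ directions), and verify that $\psi(d_i)\to\kappa$ and $e_i^{(k)}/d_i\to1$ along the chosen subsequence of eigenvectors. It is precisely the partial-trace identity with factor $q$, combined with the general-position count $p-q$, that delivers the sharp threshold $1-p/\kappa$ uniformly over all collapse ranks $q$; a naive use of $\Tr[V_k^{-1}V_k]=p$ would only give the weaker $1-p/\kappa-(p-1)/N$.
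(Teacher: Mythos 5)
Your proof is correct, but for part (2) it takes a genuinely different route from the paper's. For (1) the two arguments are essentially the same idea --- exploit $\psi\le\kappa$ together with the lower bound $\bm z^\top(V^{-1}+\alpha I_p)\bm z\ge\alpha\|\bm z\|^2$ supplied by the ridge term --- executed slightly differently: the paper bounds $\hat V$ in the L\"owner order by $\kappa(1/\lambda_1+\alpha)^{-1}I_p$ and solves $\lambda_1\le\kappa/(1/\lambda_1+\alpha)$ to get $\lambda_1\le(\kappa-1)/\alpha$, whereas your trace bound gives $\lambda_1\le\kappa/\alpha$; both suffice, and your observation that only Condition (B) is really needed is accurate. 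For (2) the paper normalizes the precision matrices, passes to a limit $\Omega$ of $V^{-1}/\Tr[V^{-1}]$ on the trace-one simplex, projects onto $\mathrm{Ker}(\Omega)$, derives a rank identity for $\mathrm{rank}(\bar P_\Omega)$, and combines it with the full trace identity \eqref{nesto0} and a general-position count; it must also track the outlier directions $\bm u_{i,\ell}$ through a compactness argument on $\mathbb{S}^{p-1}$. You instead keep the unnormalized $V_k$ (compact by part (1)), project onto the $q$-dimensional collapsing eigenspace $L$, and use the single partial-trace identity $q=\frac1N\sum_j w(d_j)e_j^{(k)}$ together with the count that $L^\perp$ holds at most $p-q$ good points; the outliers and the points in $L^\perp$ contribute nonnegatively and are simply discarded, so the ``indeterminate $0/0$ contributions'' you flag as the main obstacle never actually need to be controlled --- only the escaping terms must be shown to tend to $\kappa$, which your spectral-gap argument ($e_i^{(k)}\to\infty$, $d_i-e_i^{(k)}$ bounded, $\psi\nearrow\kappa$) does rigorously. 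Your version is more self-contained (no rank identity, no handling of outlier directions) and reaches the same threshold $1-p/\kappa$ uniformly over the collapse rank $q$; the paper's version, by working with $\mathrm{Ker}(\Omega)$, additionally yields the explicit family of bounds $\varepsilon_m\ge1-n(p-r)/\{\kappa(n-r)\}$ indexed by $r=\mathrm{rank}(\bar P_\Omega)$, which is of some independent interest but not needed for the stated lemma.
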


\begin{remark}
  Lemma~\ref{bdbp}-(1) provides a desirable property of the proposed estimator. 
  The proposed estimator does not \textit{breakdown from above}, in the sense that its largest eigenvalue  remains bounded whenever $\hat{V}(\mathrm{Z})$ exists. 
\end{remark}

\begin{remark}
  For non-penalized $M$-estimators, a stringent condition $\varepsilon_m < 1 / \kappa$ is sufficient to obtain the same conclusion as in Lemma~\ref{bdbp}-(1) \cite{T8614}. 
  In addition, for RegTYL, LNSMI, and HME, the condition $\kappa < 1$ is sufficient for the same conclusion \cite{TYN23}.
  However, the condition $\kappa < 1$ is incompatible with the necessary condition $\kappa \geq p$ for the existence of a non-penalized $M$-estimator in $\mathcal{S}_{++}^p$. 
  Hence, RegTYL, LNSMI, and HME involve a trade-off between bias and robustness \cite{TYN23}. 
  In contrast, the proof for the proposed estimator shows that the much milder condition $\kappa > 1$ is sufficient. 
\end{remark}

Lemmas~\ref{exbp} and~\ref{bdbp} yield the following theorem, which evaluates the AdBP $\varepsilon^* ( \hat{V}(\cdot), \mathrm{X} )$ of the estimator $\hat{V}(\cdot)$ defined as the solution to the estimating equation~\eqref{eseq:pro1}. 

\begin{thm}\label{bpbp}
Let Condition~\ref{conm}-(A)--(D) and~(F) hold and suppose that $\kappa > n(p-1) / (n - p)$.
  Then it holds that
  \begin{equation*}
    1 - \frac{p}{\kappa} - \frac{p-1}{n} \le \varepsilon^* \left( \hat{V}(\cdot), \mathrm{X} \right) \le 1 - \frac{p}{\kappa}.
  \end{equation*}
\end{thm}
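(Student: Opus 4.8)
The plan is to decode both inequalities through the definition of the AdBP and then to read them off from Lemmas~\ref{exbp} and~\ref{bdbp}. By definition, $\hat V(\cdot)$ breaks down at contamination level $\varepsilon_m$, i.e.\ $b(\varepsilon_m; \hat V(\cdot), X) = +\infty$, exactly when either some admissible $Y$ drives $Z$ out of $\mathcal{E}_m(X)$ so that $\hat V(Z)$ fails to exist, or $\sup_Y B(\hat V(X), \hat V(Z))$ is infinite over $Z \in \mathcal{E}_m(X)$. Since every admissible discrepancy function diverges precisely as $\lambda_p(\hat V(Z)) \to 0$ or $\lambda_1(\hat V(Z)) \to \infty$, the second route is closed once $\lambda_1(\hat V(Z))$ is bounded above and $\lambda_p(\hat V(Z))$ is bounded away from $0$, uniformly over admissible $Z$. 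The two bounds then amount to locating the threshold at which these obstructions first become unavoidable.

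For the upper bound $\delta^*(X; \hat V) \le 1 - p/\kappa$ I would argue purely by non-existence. Lemma~\ref{exbp}-(2) supplies, for every $\varepsilon_m > 1 - p/\kappa$, a contamination with $Z \notin \mathcal{E}_m(X)$, so $b(\varepsilon_m; \hat V(\cdot), X) = +\infty$ by definition. Breakdown is therefore forced at every ratio beyond $1 - p/\kappa$, and since $\delta^*$ is the least ratio with $b = +\infty$ it cannot exceed this threshold; the discreteness of the grid $\{m/(n+m)\}_m$ moves the literal minimiser up by at most one grid step, which is comfortably inside the $(p-1)/n$ gap separating the two asserted bounds.

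For the lower bound $\delta^*(X; \hat V) \ge 1 - p/\kappa - (p-1)/n$ I would show that no breakdown can occur for $\varepsilon_m < 1 - p/\kappa - (p-1)/n$, by closing all three routes simultaneously. Existence of $\hat V(Z)$ for every such contamination is exactly Lemma~\ref{exbp}-(1). The bound on $\lambda_1(\hat V(Z))$ from above is the ``no breakdown from above'' content of Lemma~\ref{bdbp}-(1), which (as its remark stresses) holds whenever $\hat V(Z)$ exists and is thus uniform in the contamination level. Finally, because $(p-1)/n \ge 0$ gives $1 - p/\kappa - (p-1)/n \le 1 - p/\kappa$, the entire regime lies inside the range of Lemma~\ref{bdbp}-(2), which bounds $1/\lambda_p(\hat V(Z))$ from above and so keeps $\lambda_p$ off $0$. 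With both eigenvalue extremes controlled, $\sup_Y B(\hat V(X), \hat V(Z)) < \infty$, whence $b(\varepsilon_m; \hat V(\cdot), X) < \infty$ and the estimator does not break down.

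The main difficulty is the bookkeeping of hypotheses rather than any hard estimate. Lemmas~\ref{exbp}-(2) and~\ref{bdbp}-(2) both call on Condition~\ref{conm}-(E), namely $\kappa > p$, which is not literally listed in the theorem. I would recover it from the existence theory already in hand: the standing assumptions are precisely those of Corollary~\ref{cor:prexcon}, so $\hat V(X)$ exists in $\mathcal{S}_{++}^p$, and then the necessary condition of Lemma~\ref{thm:prexnescon}-(1) returns $p \le \kappa$. The borderline $\kappa = p$ makes $1 - p/\kappa = 0$ and pins the asserted interval down to $\delta^* = 0$, a case settled directly; for $\kappa > p$ Condition~(E) is in force and both lemmas apply verbatim. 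The only remaining care is the reading of Lemma~\ref{bdbp}-(1) noted above, whose upper bound on $\lambda_1$ I take to be uniform over all contamination levels at which $\hat V(Z)$ exists, so that it indeed covers the whole lower-bound regime.
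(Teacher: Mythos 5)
Your argument matches the paper's proof: the upper bound is obtained from Lemma~\ref{exbp}-(2) (the paper makes the witnessing contamination explicit by taking $Y=\{\bm{0},\dots,\bm{0}\}$ and invoking Lemma~\ref{thm:prexnescon}-(2)), and the lower bound from Lemma~\ref{exbp}-(1) together with both parts of Lemma~\ref{bdbp}, exactly as you describe. Your detour to recover Condition~\ref{conm}-(E) is unnecessary, since Condition~\ref{conm}-(F) already forces $1-p/\kappa > P(H) \geq 0$ and hence $\kappa>p$.
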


\begin{remark}
  The upper bound $1 - p / \kappa$ of the AdBP in Theorem~\ref{bpbp} is attained when there exists a point mass at the origin. 
\end{remark}

\section{Numerical experiments} \label{numerical}
In this section, we compare the finite sample performance of the proposed estimator, which uses the Tyler type weighting function $w(s) = p / s \, (s > 0)$, with that of the SCM, TME, LNSMI, SSCM, and $I_p$. 

\subsection{Settings}\label{sec:simusets}
We describe the settings of the numerical experiments in three parts. 

\medskip 

\begin{enumerate}[label=\textbf{$\langle \arabic* \rangle$}, left=0pt]

    \item \textbf{Generation of covariance matrices} \\ 
    Fix the data dimension $p$, specify the trace $\mathrm{Tr}[\Sigma]$ of the true scatter matrix $\Sigma \in \mathcal{S}_{++}^{p}$, the largest contribution ratio $c_1 \coloneqq \lambda_1 (\Sigma) / \mathrm{Tr}[\Sigma]$, and the condition number $c_2 \coloneqq \lambda_1 (\Sigma) / \lambda_{p}(\Sigma)$, e.g., $\mathrm{Tr}[\Sigma] = p$, $c_1 =0.30$, $c_2=50$, and generate $\Sigma = \Sigma(c_1, c_2)$ that satisfies these specifications. 
    Since all methods compared here possess orthogonal equivariance as discussed in Subsection~\ref{eseqpro}, we may choose the eigenvectors of $\Sigma$ as the standard basis. 
    
    \item \textbf{Generation of main body data} \\ 
    Specify all sample sizes $N$ and outlier ratios $\xi$, e.g., $N = 100$, $\xi = 0.03$, and set the number of outliers as $m = \lceil N \xi \rceil$. 
    Generate the main body data $\bm{z}_i$ for $i \in [n]$, where $n \coloneqq N - m$ from the multivariate normal distribution with mean vector $\bm{0}$ and scatter matrix $\Sigma$. 

    \item \textbf{Generation of outliers} \\ 
    Generate the outliers $\bm{z}_{i}$ for $i \in [N] \setminus [n]$ so that their squared Mahalanobis distances equal a constant multiple $k$, e.g., $k = 10$, of the sample mean of the squared Mahalanobis distances of the main body data points
    \[
        r \coloneqq \frac{1}{n} \sum_{i=1}^{n} \bm{z}_i^\top \Sigma^{-1} \bm{z}_i, 
    \]
    namely, the outliers satisfy
    \[
        \bm{z}_i^\top \Sigma^{-1} \bm{z}_i = kr \quad (i \in [N] \setminus [n]) \text{.}
    \]
    From the set of points satisfying this condition as illustrated in Figure~\ref{fig:od}, generate the outliers by the following two methods: 
    \begin{enumerate}[label={}, left=0pt]
        \item 1.\ \textbf{unclustered outliers} 
        
        We obtain $m$ points from random directions, shown in Figure~\ref{fig:ot}-(a);
        \item 2.\ \textbf{clustered outliers} 
        
        We obtain $m$ points from a single random direction, as a \textit{cluster}, shown in Figure~\ref{fig:ot}-(b). 
        Specifically, outliers are randomly generated from $\mathcal{N}_p(\bm{\mu}_\mathrm{o}, 0.01 I_p)$, where $\bm{\mu}_\mathrm{o}$ is given by the method $\langle 3 \rangle$. 
    \end{enumerate}
\end{enumerate}

\begin{figure}[t]
    \centering
    \includegraphics[width=0.5\columnwidth]{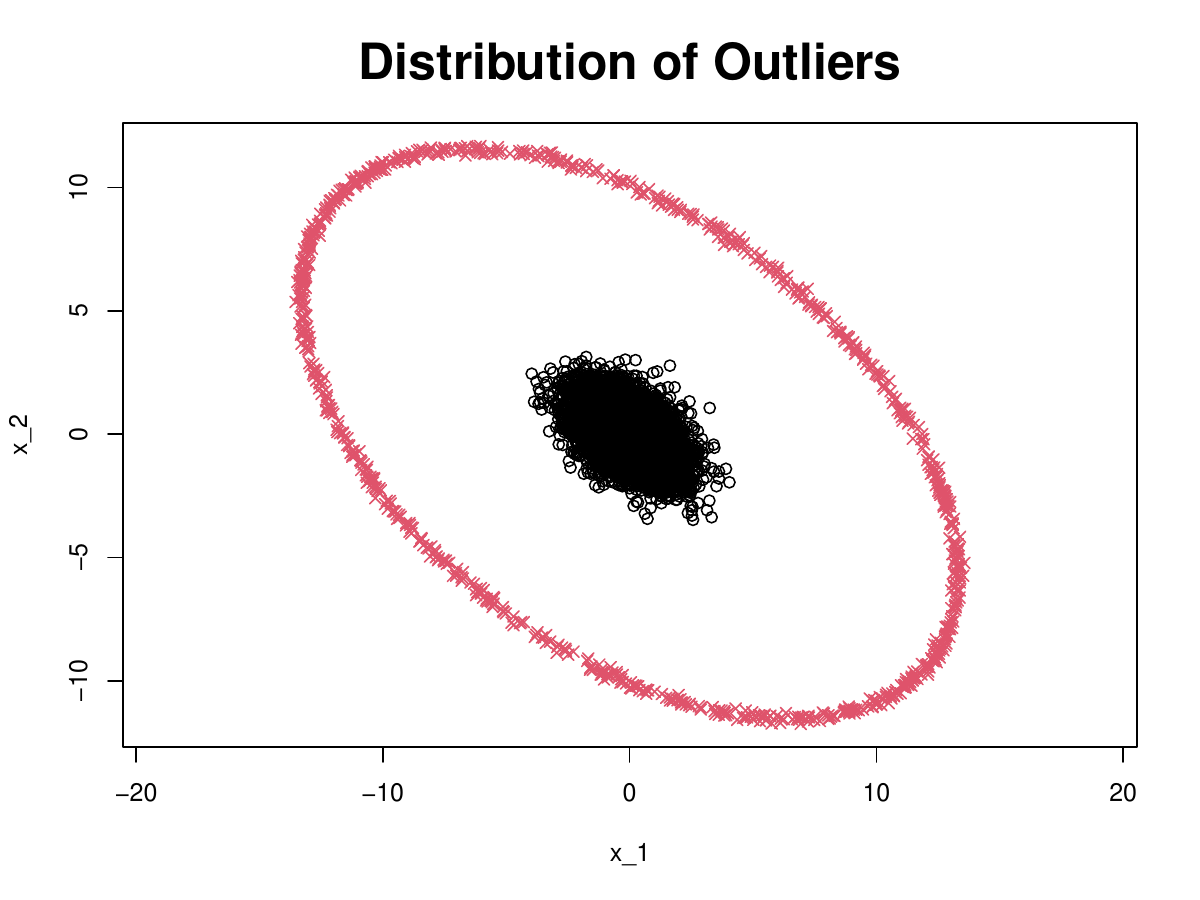}
    \caption{An illustration of the distribution of outliers for the case $p=2$, with $c_1 = 0.6$, $c_2 = 1000$, $N = 10000$, $\xi = 0.1$, and $k = 10$. 
    Data points belonging to the main body are plotted as black circles,
    and outliers are represented by red crosses. }
    \label{fig:od}
\end{figure}

\begin{figure}[t]
    \centering
    \includegraphics[width=\columnwidth]{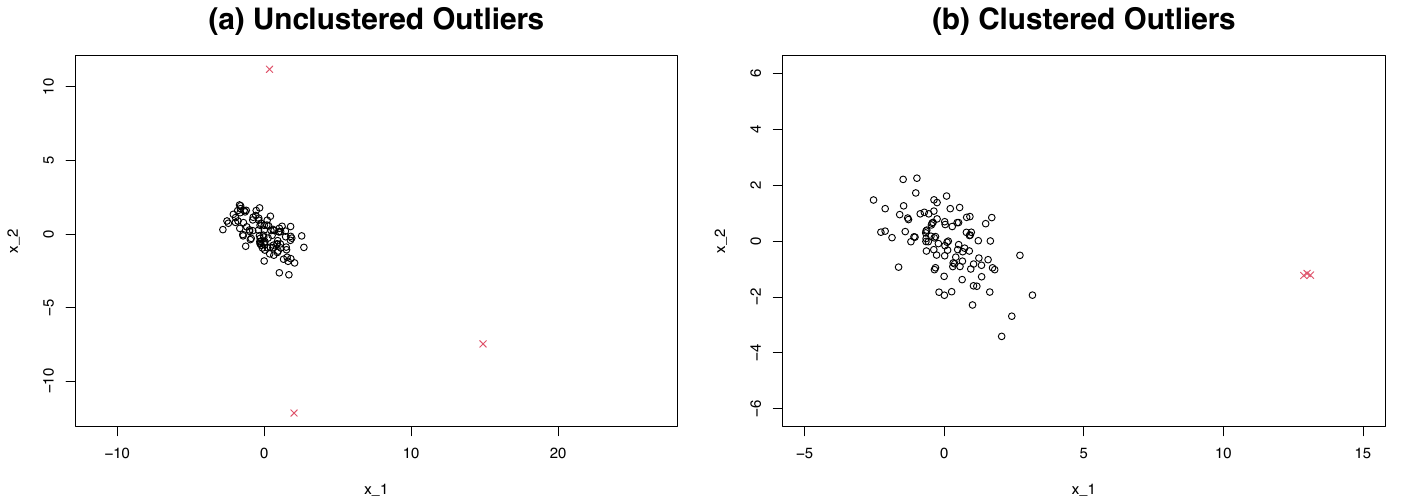}
    \caption{An illustration of the outlier generation methods for the case $p = 2$, with $c_1 = 0.6$, $c_2 = 1000$, $N = 100$, $\xi = 0.03$, and $k = 10$. 
    The left panel illustrates unclustered outliers, while the right panel illustrates clustered outliers.
    Data points belonging to the main body are plotted as black circles, and outliers are represented by red crosses. }
    \label{fig:ot}
\end{figure}

The performance of SCM, TME, LNSMI, SSCM, $I_p$, and the proposed Tyler type estimator is evaluated under each condition by varying 
\begin{itemize}
    \item[(i)]
     data generation parameters $(p, \mathrm{Tr}[\Sigma], c_1, c_2, N, \xi, k)$,
    \item[(ii)] 
      the generating distribution of the main body data,
    \item[(iii)]
     the outlier configurations (unclustered outliers or clustered outliers).
\end{itemize}
The performance of each estimator is assessed using the RMSE computed over 1000 trials for each setting.
In addition, the shrinkage coefficient of LNSMI is determined based on Theorem~2 in \cite{CWH11}. 

First, Subsection~\ref{sec:simr1} examines how the estimation error changes with varying dimensions of the random vector.
These results demonstrate that TME and LNSMI fail to perform properly under certain conditions. 
Next, Subsection~\ref{simuresult2} examines how the estimation error changes when the shrinkage coefficient $\alpha$ is varied. 
This analysis reveals the behavior of LNSMI and the proposed Tyler type estimator, regardless of whether the shrinkage coefficient is successfully computed. 
In this analysis, we examine the effect of the shrinkage coefficient on the estimator. 

\begin{remark}
  Once the largest contribution ratio $c_1$, the condition number $c_2$, and the trace $\mathrm{Tr}[\Sigma]$ are specified, the largest and smallest eigenvalues $\lambda_1 (\Sigma)$ and $\lambda_{p}(\Sigma)$ are uniquely determined. 
  Details on the generation of $\lambda_2(\Sigma), \ldots, \lambda_{p-1}(\Sigma)$ can be found in Appendix~C. 
\end{remark}

\begin{remark} \label{rem:tc}
  The initial values for TME, LNSMI, and the proposed method are set to the trace normalized SCM. 
  For TME and LNSMI, the stopping criterion is defined as 
  \begin{equation*}
      \left\| V_{k+1} - V_k \right\|_\mathrm{F} < \epsilon,
  \end{equation*}
  with $\epsilon = 10^{-6}$. 
  For the proposed method, among several stopping criteria examined, the following stopping criterion yielded the best performance and is therefore adopted: 
  \begin{equation*} 
    \min \left\{ \left\| \Omega_{k+1} - \Omega_k \right\|_\mathrm{F}, \left\| V_{k+1} - V_k \right\|_\mathrm{F} \right\} < \epsilon \text{.}
  \end{equation*}
\end{remark}

\subsection{Penalty coefficient computation for the proposed method}\label{coef}
Following \cite{CWEH10, CWH11, LW04, OT14}, we determine the shrinkage coefficient $\hat{\alpha}$ by considering the optimization problem 
\begin{equation*}
    \underset{\alpha \in \mathbb{R}}{\operatorname{arg\,min}} \ \mathrm{E} \left[\left\| (1 - \alpha) C_\textnormal{P}^{-1} + \alpha I_p - \Omega \right\|_\mathrm{F}^{2} \right] \text{, where} \; C_\textnormal{P} \coloneqq \frac{p}{N}\sum_{i=1}^N \frac{\bm{x}_{i} \bm{x}_{i}^{\top}}{\bm{x}_{i}^{\top} \Omega \bm{x}_{i}} \text{.}
\end{equation*}
The solution to this optimization problem is given by
\begin{equation}\label{rho2o}
  \alpha_\textnormal{o} \coloneqq \frac{\mathrm{Tr}[\Omega] - \mathrm{Tr}\left[\mathrm{E}\left[C_\textnormal{P}^{-1}\right]\right] - \mathrm{Tr}\left[\mathrm{E}\left[C_\textnormal{P}^{-1}\right]\Omega\right] + \mathrm{E}\left[\left\| C_\textnormal{P}^{-1} \right\|_\mathrm{F}^{2}\right]}{p-2\mathrm{Tr}\left[\mathrm{E}\left[C_\textnormal{P}^{-1}\right]\right] + \mathrm{E}\left[\left\|C_\textnormal{P}^{-1}\right\|_\mathrm{F}^{2}\right]} \text{.}
\end{equation}
Since $\Omega$ is the true precision structure, $\alpha_\textnormal{o}$ is an oracle quantity. 
Accordingly, by substituting into (\ref{rho2o}) an appropriate estimator $M_2$ for $\Omega$, that is, by defining
\begin{equation*}
    \hat{C}_\textnormal{P} \coloneqq \frac{p}{N}\sum_{i=1}^N \frac{\bm{x}_{i} \bm{x}_{i}^{\top}}{\bm{x}_{i}^{\top} M_2 \bm{x}_{i}}
\end{equation*}
we wish to propose to use
\begin{equation*}
    \frac{\mathrm{Tr}[M_2] - \mathrm{Tr}\left[\mathrm{E}\left[\hat{C}_\textnormal{P}^{-1}\right]\right] - \mathrm{Tr}\left[\mathrm{E}\left[\hat{C}_\textnormal{P}^{-1}\right] M_2 \right] + \mathrm{E}\left[\left\|\hat{C}_\textnormal{P}^{-1}\right\|_\mathrm{F}^{2}\right]}{p - 2\mathrm{Tr}\left[\mathrm{E}\left[\hat{C}_\textnormal{P}^{-1}\right]\right] + \mathrm{E}\left[\left\|\hat{C}_\textnormal{P}^{-1}\right\|_\mathrm{F}^{2}\right]} 
\end{equation*}
as $\hat{\alpha}$. 
However, because it is often difficult to express $\mathrm{E}[\hat{C}_\textnormal{P}^{-1}]$ and $\mathrm{E}[\|\hat{C}_\textnormal{P}^{-1} \|_\mathrm{F}^{2}]$ in closed form, we therefore replace these expectations by their numerical approximations obtained via the bootstrap, and use the resulting plug-in value as $\hat{\alpha}$. 
Specifically, by substituting the bootstrap approximations $m_1$ and $m_2$ for $\mathrm{E}[\hat{C}_\textnormal{P}^{-1}]$ and $\mathrm{E}[ \| \hat{C}_\textnormal{P}^{-1} \|_\mathrm{F}^{2}]$, respectively, we propose the shrinkage coefficient
\begin{equation*}
    \hat{\alpha} \coloneqq \frac{\mathrm{Tr}[M_2] - \mathrm{Tr}\left[ m_1 \right] - \mathrm{Tr}\left[ m_1 M_2 \right] + m_2}{p - 2\mathrm{Tr}\left[ m_1 \right] + m_2} \text{.}
\end{equation*}

\begin{remark}
When approximating $\mathrm{E}[\hat{C}_\textnormal{P}^{-1}]$ and $\mathrm{E}[ \|\hat{C}_\textnormal{P}^{-1} \|_\mathrm{F}^{2}]$ via the bootstrap method, it is necessary to resample as many distinct data points as possible, up to a maximum of $N$, in order to avoid rank deficiency in $\hat{C}_\textnormal{P}$. 
Specifically, when the sample size is $N = 100$, we recommend performing approximately 370 resampling iterations. 
The justification for this choice is provided in Appendix~B. 
\end{remark}

\subsection{Performance across different dimensions}\label{sec:simr1}
\begin{figure}[t]
    \centering
    \includegraphics[width=0.85\columnwidth]{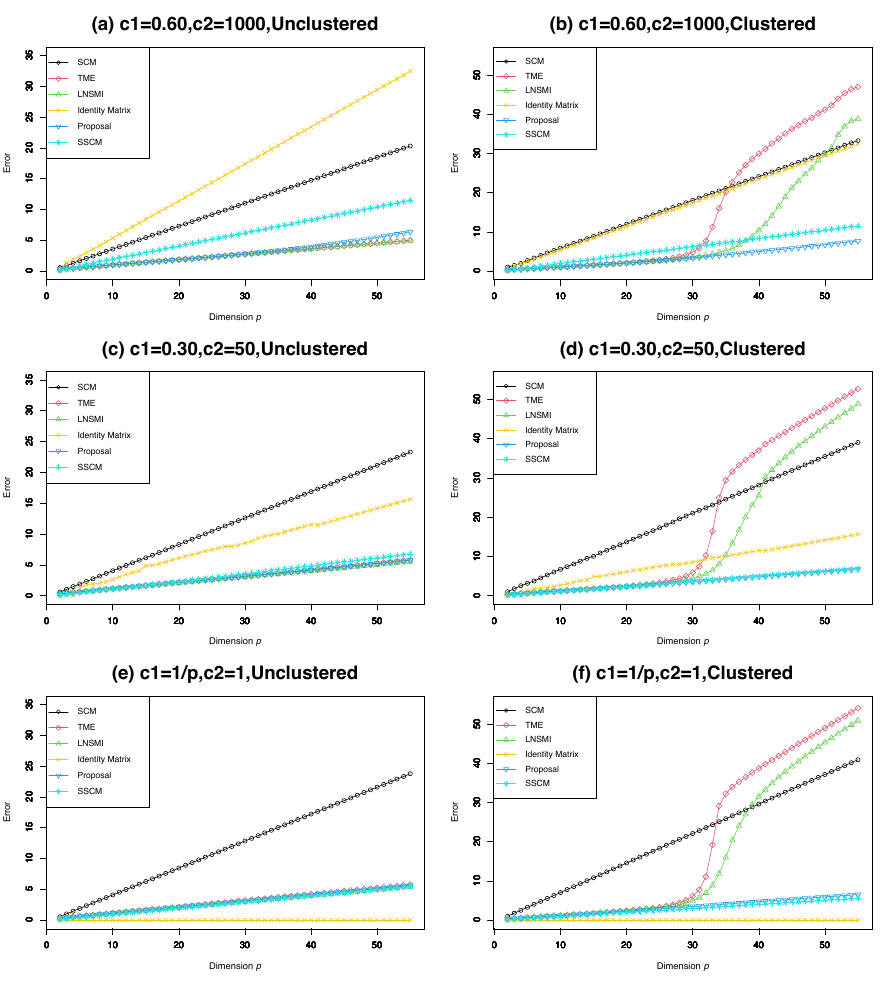}
    \caption{
    Plots of the RMSE values of each estimator for different data dimensions $p$, with lines connecting the plotted points. 
    The horizontal axis represents the dimension $p$ $( = 2, \ldots, 55 )$, and the vertical axis represents the RMSE. 
    The other settings are $N = 100$, $\xi = 0.03$, and $k = 10$. 
    Figure~(a, c, e) show the results for unclustered outliers, while Figure~(b, d, f) correspond to clustered outliers. 
    In Figure~(a, c, e), the curves representing TME, LNSMI, SSCM, and the proposed Tyler type estimator nearly overlap. 
    Furthermore, in Figure~(d, f), the curves representing SSCM and the proposed Tyler type estimator nearly overlap. 
    }\label{fig:r1}
\end{figure}

Figure~\ref{fig:r1} illustrates how the RMSE of each estimator changes with the dimension $p$ of the data points. 
In this subsection, we set $p \in [55] \setminus \{1\}$, $\mathrm{Tr}[\Sigma] = p$, $N = 100$, $\xi = 0.03$, and $k = 10$. 
Numerical values of the estimation errors are provided in Appendix~D.
Additional experimental results under different parameter settings are partially presented in Appendices~E and F. 

\noindent
\textbf{1. unclustered outliers.}

Figure~\ref{fig:r1}-(a, c, e) shows the results for the case of unclustered outliers. 
Specifically, Figure~\ref{fig:r1}-(a), (c), and (e) correspond to
$(c_1, c_2) = (0.6, 1000)$, $(0.3, 50)$, and $(1/p, 1)$, respectively,
where the last case corresponds to $\Sigma = I_p$. 

The curves representing the performance of TME, LNSMI, SSCM, and the proposed Tyler type estimator nearly overlap with one another. 
However, in Figure~\ref{fig:r1}-(a), where the true covariance structure deviates from $I_p$, SSCM exhibits relatively poor estimation accuracy. 
On the other hand, the behavior of all estimators remains highly stable, and in particular, TME, LNSMI, SSCM, and the proposed Tyler type estimator show better estimation accuracy than SCM. 

\noindent
\textbf{2. clustered outliers.}

Figure~\ref{fig:r1}-(b, d, f) shows the results for the case of clustered outliers. 
Specifically, Figure~\ref{fig:r1}-(b), (d), and (f) correspond to
$(c_1, c_2) = (0.6, 1000)$, $(0.3, 50)$, and $(1/p, 1)$, respectively. 

We observe that when the outlier ratio is $\xi = 0.03$, the performance of TME and LNSMI deteriorates markedly once the dimension exceeds $p = 35$, becoming even worse than that of SCM. 
In contrast, SSCM and the proposed Tyler type estimator successfully suppress the severe degradation seen in TME and LNSMI. 
These phenomena are consistently observed under various settings with different values of $\mathrm{Tr}[\Sigma]$, $N$, $\xi$, and $k$. 
It is known that the asymptotic breakdown point of TME lies between $1/(p+1)$ and $1/p$, and the observed behavior of TME conforms to this property \cite{DT05}. 

\subsection{Sensitivity to shrinkage coefficients}\label{simuresult2}
\begin{figure}[t]
    \centering
    \includegraphics[width=0.85\columnwidth]{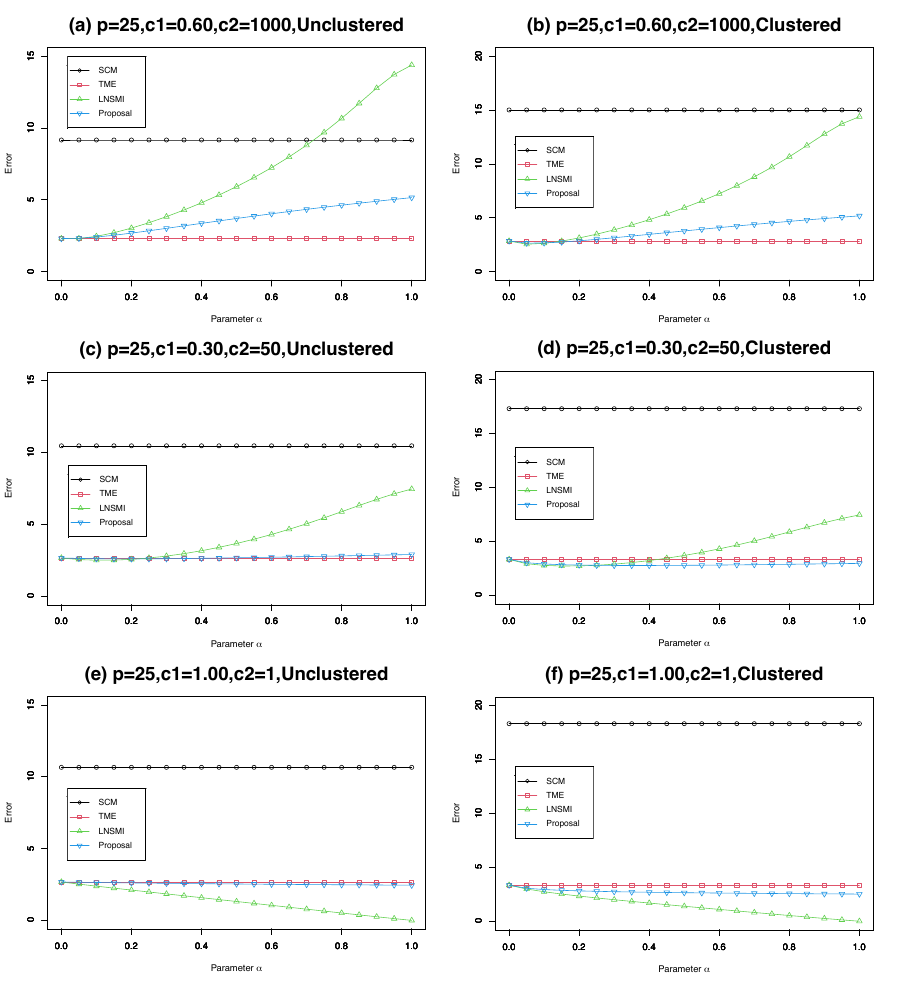}
    \caption{
    Plots of the RMSE values of each estimator for different values of the shrinkage coefficient $\alpha$ in the case $p = 25$, with lines connecting the plotted points. 
    The horizontal axis represents the shrinkage coefficient $\alpha$, and the vertical axis represents the RMSE. 
    The other settings are $N = 100$, $\xi = 0.03$, and $k = 10$. 
    Figure~(a, c, e) show the results for unclustered outliers, while Figure~(b, d, f) correspond to clustered outliers. 
    }
    \label{fig:r21}
\end{figure}

\begin{figure}[t]
    \centering
    \includegraphics[width=0.85\columnwidth]{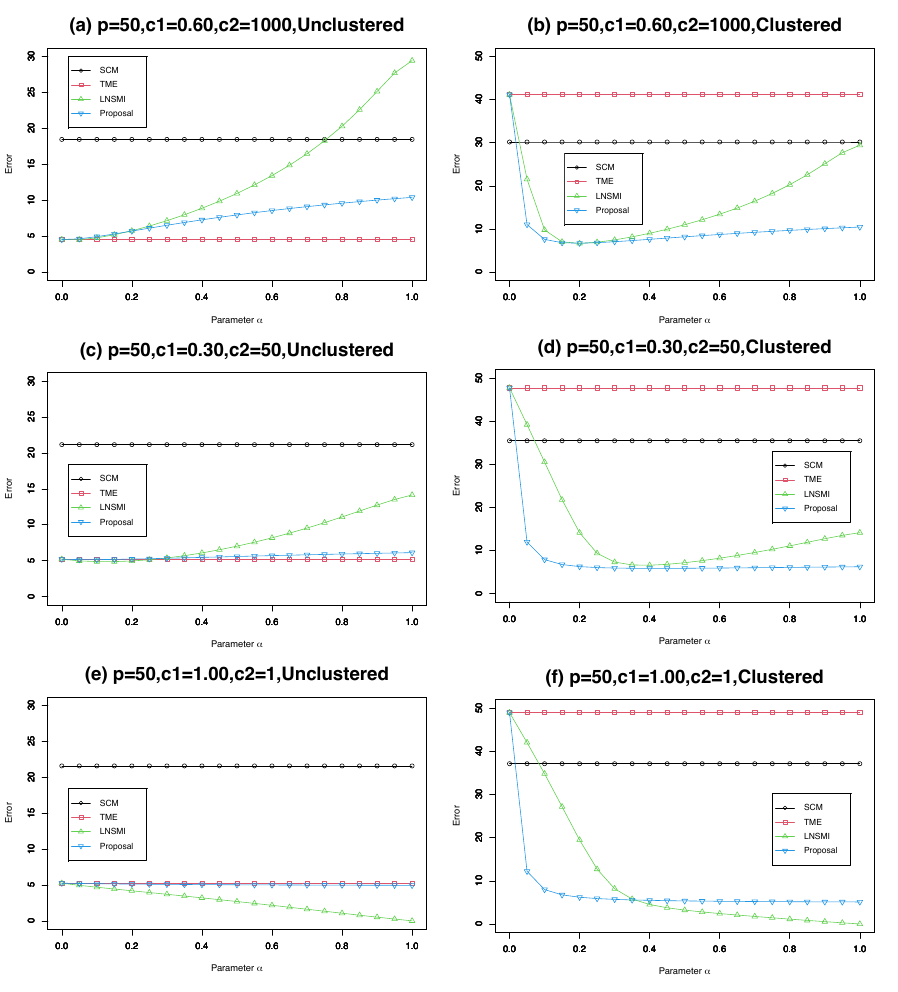}
    \caption{
    Plots of the RMSE values of each estimator for different values of the shrinkage coefficient $\alpha$ in the case $p = 50$, with lines connecting the plotted points. 
    The horizontal axis represents the shrinkage coefficient $\alpha$, and the vertical axis represents the RMSE. 
    The other settings are $N = 100$, $\xi = 0.03$, and $k = 10$. 
    Figure~(a, c, e) show the results for unclustered outliers, while Figure~(b, d, f) correspond to clustered outliers. 
    }
    \label{fig:r22}
\end{figure}

To examine the effect of the shrinkage coefficient, we set $\alpha \in [0,1]$ to various values for LNSMI and the proposed Tyler type estimator and compare the behavior of each estimator. 
We set $p \in \{25, 50\}$, $\mathrm{Tr}[\Sigma] = p$, $N = 100$, $\xi = 0.03$, and $k = 10$. 
Figure~\ref{fig:r21} presents the results for $p = 25$, and Figure~\ref{fig:r22} shows those for $p = 50$. 
In Figures~\ref{fig:r21} and \ref{fig:r22}, panels (a, c, e) correspond to the case of unclustered outliers, while panels (b, d, f) correspond to the case of clustered outliers. 
Moreover, in Figures~\ref{fig:r21} and \ref{fig:r22}, panels (a, b), (c, d), and (e, f) correspond to $(c_1, c_2) = (0.6, 1000)$, $(0.3, 50)$, and $(1/p, 1)$, respectively. 

Since SCM, TME, and SSCM do not involve shrinkage, their corresponding results remain constant. 
For LNSMI, when $\alpha = 1$, the output becomes $I_p$.
Therefore, in Figure~\ref{fig:r21}-(e, f) and Figure~\ref{fig:r22}-(e, f), where the true covariance structure is $I_p$, the RMSE value of LNSMI becomes $0$ at $\alpha = 1$. 
In contrast, for the proposed Tyler type estimator, the result coincides with SSCM when $\alpha = 1$. 
The estimation errors of TME and those of LNSMI and the proposed Tyler type estimator show very similar behavior when $\alpha \approx 0$.
Indeed, when $\alpha \approx 0$, both LNSMI and the proposed Tyler type estimator operate in a manner similar to TME.
In particular, LNSMI and the proposed Tyler type estimator theoretically coincide when $\alpha = 0$. 

Below, we provide an overview of the behavior of the estimators as the shrinkage coefficient $\alpha$ is varied.
Note that the meaning of nonzero values of $\alpha$ differs between LNSMI and the proposed method, so the results of the two methods cannot be directly compared when $\alpha \neq 0$. 

First, comparing the points at which each estimator attains its smallest estimation error over the range $\alpha \in [0,1]$, we see that the proposed Tyler type estimator achieves accuracy comparable to the other estimators in many settings. 
On the other hand, in the presence of clustered outliers, LNSMI can also attain estimation accuracy similar to that of the proposed Tyler type estimator if an appropriate value of $\alpha$ is chosen. 
However, except for the cases in Figure~\ref{fig:r21}-(e, f) and Figure~\ref{fig:r22}-(e, f), where the true covariance structure is $I_p$, the curve representing the proposed Tyler type estimator generally lies below that of LNSMI. 
This observation suggests that the proposed Tyler type estimator is robust with respect to the choice of $\alpha$. 

\section{Proofs}\label{sec:proof}
\subsection{Proof of Theorem~\ref{thm:prexcon1}}
The main idea of the proof of Theorem~\ref{thm:prexcon1} is essentially the same as the proof by Lemma 1 and Theorem 1 of Maronna~\cite{M76}. 
We first state the following lemma which will be used to prove Theorem~\ref{thm:prexcon1}.
\begin{lem}\label{lem4}
    Let Condition~\ref{conm}-(B) and (C) hold. 
    There exists $r_0 > 0$ such that
    \begin{equation*}
        \mathrm{E} \left[w \left( \left(\frac{1}{r} + \alpha \right) \| \bm{X} \|^2 \right) \bm{X} \bm{X}^\top \right] \preceq r I_p, 
    \end{equation*}
    for any $r \geq r_0$. 
\end{lem}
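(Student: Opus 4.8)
The plan is to reduce the L\"owner inequality to a single scalar estimate and to extract all of the work from Condition~\ref{conm}-(B). Fix $r > 0$ and set
\[
  A_r \coloneqq \E\left[ w\!\left( \left( \tfrac{1}{r} + \alpha \right) \|\bm{X}\|^2 \right) \bm{X} \bm{X}^\top \right].
\]
Since $A_r$ is symmetric positive semidefinite, the assertion $A_r \preceq r I_p$ is equivalent to $\bm{u}^\top A_r \bm{u} \le r$ for every $\bm{u} \in \mathbb{S}^{p-1}$. Using $(\bm{u}^\top \bm{X})^2 \le \|\bm{X}\|^2$, it therefore suffices to bound the scalar quantity $\E[w((\tfrac{1}{r}+\alpha)\|\bm{X}\|^2)\,\|\bm{X}\|^2]$, which equals $\Tr[A_r]$, uniformly by $r$.

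The core estimate comes from rewriting the integrand through $\psi(s) = s\,w(s)$. On the event $\{\bm{X} \neq \bm{0}\}$ we put $s = (\tfrac{1}{r}+\alpha)\|\bm{X}\|^2 > 0$ and obtain
\[
  w\!\left( \left( \tfrac{1}{r}+\alpha \right)\|\bm{X}\|^2 \right) \|\bm{X}\|^2
  = \frac{\psi\!\left( \left( \tfrac{1}{r}+\alpha \right)\|\bm{X}\|^2 \right)}{\tfrac{1}{r}+\alpha}
  \le \frac{\kappa}{\tfrac{1}{r}+\alpha} = \frac{\kappa r}{1 + \alpha r},
\]
where the inequality is exactly Condition~\ref{conm}-(B). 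On $\{\bm{X} = \bm{0}\}$ the contribution to $\bm{X}\bm{X}^\top$ is the zero matrix, so that event can be discarded. Taking expectations gives $\bm{u}^\top A_r \bm{u} \le \E[w(\cdots)\|\bm{X}\|^2] \le \kappa r/(1+\alpha r)$ for every unit vector, and in particular $\Tr[A_r] \le \kappa r/(1+\alpha r) < \infty$, so $A_r$ is a well-defined finite positive semidefinite matrix.

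It remains to force the bound below $r$. Since $\kappa r/(1+\alpha r) \le r$ is equivalent to $\kappa \le 1 + \alpha r$, i.e.\ to $r \ge (\kappa - 1)/\alpha$, we may take $r_0 \coloneqq \max\{(\kappa-1)/\alpha,\,1\}$; then $A_r \preceq r I_p$ for all $r \ge r_0$, and $r_0 > 0$ as required. (Under Condition~\ref{conm}-(E) one has $\kappa > p \ge 1$, so in fact $r_0 = (\kappa-1)/\alpha$.)

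I expect the only genuine subtlety — and the step worth flagging — to be the dependence on $\alpha > 0$: the threshold $(\kappa-1)/\alpha$ is finite precisely because $\alpha > 0$, and indeed for $\alpha = 0$ the quantity $\E[w(\|\bm{X}\|^2/r)\|\bm{X}\|^2] = \E[\,r\,\psi(\|\bm{X}\|^2/r)\,]$ behaves like $r\,\psi(0^+)$ for large $r$, which exceeds $r$ whenever $\psi(0^+) > 1$, as happens for Tyler-type weights with $\psi \equiv p$. Thus positivity of the shrinkage coefficient is what makes the bound possible. Note also that only the uniform bound $\psi \le \kappa$ from (B) is used; the monotonicity in (C) is not strictly needed for this particular estimate, though it is available and could instead justify passing to the limit $r \to \infty$ by dominated convergence.
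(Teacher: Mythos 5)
Your proof is correct and rests on exactly the same key estimate as the paper's: rewriting $w(s)\|\bm{X}\|^2 = \psi(s)/(\tfrac{1}{r}+\alpha) \le \kappa/(\tfrac{1}{r}+\alpha) = \kappa r/(1+\alpha r)$ and exploiting $\alpha > 0$, the only cosmetic difference being that the paper phrases the conclusion as $\sup_{\|\bm{z}\|=1}\bm{z}^\top A_r \bm{z}/r \to 0$ while you solve $\kappa r/(1+\alpha r) \le r$ to exhibit the explicit threshold $r_0 = \max\{(\kappa-1)/\alpha, 1\}$. Your side remarks (that only Condition~(B) is genuinely used, and that positivity of $\alpha$ is what makes the bound possible) are both accurate.
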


\begin{proof}[Proof of Lemma~\ref{lem4}]
    \textnormal{}
It suffices to show that
    \begin{equation*}
        \lim_{r \to \infty} \sup_{\| \bm{z} \| = 1} \bm{z}^\top \mathrm{E} \left[w \left( \left(\frac{1}{r} + \alpha \right) \| \bm{X} \|^2 \right) \bm{X} \bm{X}^\top \right] \bm{z} / r = 0.
    \end{equation*}
It holds that
    \begin{align*}
        & \sup_{\| \bm{z} \| = 1} \bm{z}^\top \mathrm{E} \left[w \left( \left(\frac{1}{r} + \alpha \right) \| \bm{X} \|^2 \right) \bm{X} \bm{X}^\top \right] \bm{z} / r \\
        &= \sup_{\| \bm{z} \| = 1} \bm{z}^\top \mathrm{E} \left[\psi \left( \left(\frac{1}{r} + \alpha \right) \| \bm{X} \|^2 \right) \frac{\bm{X} \bm{X}^\top}{(1 / r + \alpha) \| \bm{X} \|^2} \right] \bm{z} / r \\
        &= \sup_{\| \bm{z} \| = 1} \bm{z}^\top \mathrm{E} \left[\psi \left( \left(\frac{1}{r} + \alpha \right) \| \bm{X} \|^2 \right) \frac{\bm{X} \bm{X}^\top}{(1 + \alpha r) \| \bm{X} \|^2} \right] \bm{z} \\
        &\leq \frac{\kappa}{1 + \alpha r}\sup_{\| \bm{z} \| = 1} \bm{z}^\top \mathrm{E} \left[\frac{\bm{X} \bm{X}^\top}{\| \bm{X} \|^2} \right] \bm{z} \\
        &\to 0
    \end{align*}
as $r \to \infty$.
This completes the proof.
\end{proof}

We proceed to show Theorem \ref{thm:prexcon1}. 

\begin{proof}[Proof of Theorem~\ref{thm:prexcon1}]
Let
\begin{equation*}
    F_{\alpha} (V) \coloneqq \mathrm{E} \left[ w\left( \bm{X}^\top \{V^{-1} + \alpha I_p \} \bm{X} \right) \bm{X} \bm{X}^\top \right].
\end{equation*}
We begin by showing that if $V$ is positive definite, then $F_{\alpha} (V)$ is also positive definite. 
To obtain a contradiction, suppose that the assertion, which is $F_{\alpha} (V)$ is positive definite, is false. 
Then there exists $\bm{z} \neq \bm{0}$ such that $\mathrm{E} \left[ w\left( \bm{X}^\top \{V^{-1} + \alpha I_p \} \bm{X} \right) (\bm{z}^\top \bm{X})^2 \right] = 0$, and so 
\begin{equation*}
    P \left( w\left( \bm{X}^\top \{V^{-1} + \alpha I_p \} \bm{X} \right) (\bm{z}^\top \bm{X})^2 = 0 \right) = 1
\end{equation*}
which is equivalent to
\begin{equation*}
    P \left( w\left( \bm{X}^\top \{V^{-1} + \alpha I_p \} \bm{X} \right) = 0 \right) + P \left( \bm{z}^\top \bm{X} = 0 \right) - P(J) = 1,
\end{equation*}
where $J = \left\{\bm{X} \in \mathbb{R}^p ; \, w\!\left( \bm{X}^\top \{V^{-1} + \alpha I_p \} \bm{X} \right) = 0 \,\right\} \cap \left\{\bm{X} \in \mathbb{R}^p ; \, \bm{z}^\top \bm{X} = 0 \,\right\}$.
We see from Condition~\ref{conm}-(C) and (D) that $P(w\left( \bm{X}^\top \{V^{-1} + \alpha I_p \} \bm{X} \right) = 0) = 0$, hence that
\begin{equation*}
        P \left( \bm{z}^\top \bm{X} = 0 \right) - P(J) = 1,
\end{equation*}
and finally that $P(\bm{z}^\top \bm{X} = 0) = 1$. 
This contradicts Condition~\ref{conm}-(F). 

By Lemma~\ref{lem4}, there exists $r_0 > 0$ such that $F_{\alpha}(V_0) \preceq r_0 I_p$.
Let $V_0 \coloneqq r_0 I_p$. 
We define the sequence $\{V_k\}_{k \in \mathbb{N}}$ recursively by $V_{k+1}=F_{\alpha}(V_{k})$. 
This sequence is well defined, and $V_{k+1} \preceq V_{k}$ holds for all $k \in \mathbb{N}$ because $w(\cdot)$ is non-increasing. 
Hence, $\tilde{V} \coloneqq \lim_{k \to \infty} V_{k}$ exists. 

What is left is to show that $\tilde{V}$ is positive definite.
Suppose the assertion is false.
That is, there exists $\bm{z} \neq \bm{0}$ such that $\bm{z}^\top \tilde{V} \bm{z} = 0$.
From Condition \ref{conm}-(B) and (C), for any $b \in (0,\kappa)$, there exists $d_b$ such that $\psi(d) \geq \kappa - b$ for $d \geq d_b$. 
Let $C_{b,k} \coloneqq \{ \bm{x} ; \, \bm{x}^\top (V_k^{-1} + \alpha I_p) \bm{x} < d_b \}$ and $C_b \coloneqq \cap_k C_{b,k}$. 
Then $C_{b,k+1} \subseteq C_{b,k}$ for all $k \in \mathbb{N}$.
Consider a decomposition $V_k = S_k^\top S_k$.
For any $k \in \mathbb{N}$ and $\bm{x} \in C_{b,k}$, it holds that
\begin{align*}
    (\bm{z}^\top \bm{x})^2 &= \{ \left( S_k\bm{z} \right)^\top \left( S_k^{-\top} \bm{x} \right) \}^2 \\
    &\leq (\bm{z}^\top V_k \bm{z})(\bm{x}^\top V_k^{-1} \bm{x}).
\end{align*}
Since $\bm{z}^\top V_k \bm{z} \to 0$ as $k \to \infty$ and $\bm{x}^\top V_k^{-1} \bm{x} < d_b$, any $\bm{x}$ satisfying $\bm{x}^\top V_k^{-1} \bm{x} < d_b$ belongs to $H\coloneqq \{ \bm{x} ; \, \bm{z}^\top \bm{x} = 0 \}$.
Therefore, $C_b \subset H$. 
Consequently, for any $k \in \mathbb{N} \cup \{0\}$ and any $b \in (0,\kappa)$, it holds that
\begin{equation*}
    V_k \succeq F_{\alpha}(V_k) \succeq \mathrm{E} \left[ I(\bm{X} \in C_{b,k}^c) ( \kappa - b) \frac{\bm{X} \bm{X}^\top}{\bm{X}^\top (V_k^{-1} + \alpha I_p) \bm{X}} \right]. 
\end{equation*}
Premultiplying $S_k^{-\top}$, postmultiplying $S_k^{-1}$ and taking the trace, we obtain
\begin{equation*}
    p \geq \mathrm{E} \left[ I(\bm{X} \in C_{b,k}^c) ( \kappa - b) \frac{\bm{X}^\top V_k^{-1} \bm{X}}{\bm{X}^\top (V_k^{-1} + \alpha I_p) \bm{X}} \right] 
\end{equation*}
for any $k \in \mathbb{N} \cup \{0\}$ and any $b \in (0,\kappa)$. 
Fatou's lemma implies that
\begin{align}
    & \liminf_{k \to \infty} \mathrm{E} \left[ I(\bm{X} \in C_{b,k}^c) ( \kappa - b) \frac{\bm{X}^\top V_k^{-1} \bm{X}}{\bm{X}^\top (V_k^{-1} + \alpha I_p) \bm{X}} \right] \notag\\
    &\geq (\kappa - b) \mathrm{E} \left[ \liminf_{k \to \infty} I(\bm{X} \in C_{b,k}^c) \frac{\bm{X}^\top V_k^{-1} \bm{X}}{\bm{X}^\top (V_k^{-1} + \alpha I_p) \bm{X}} \right] \notag \\
    &\geq (\kappa - b) \mathrm{E} \left[ \liminf_{k \to \infty} I(\bm{X} \in H^c) \frac{\bm{X}^\top V_k^{-1} \bm{X}}{\bm{X}^\top (V_k^{-1} + \alpha I_p) \bm{X}} \right]. \label{tochu}
\end{align}
It holds that
\begin{equation*}
    \lim_{k \to \infty}\frac{\bm{X}^\top V_k^{-1} \bm{X}}{\bm{X}^\top (V_k^{-1} + \alpha I_p) \bm{X}} = 
    \begin{cases}
      1 & \text{($\bm{X} \in  \textnormal{Im} ( \tilde{V} )^c$)}, \\
      \frac{1}{1 + \alpha \bm{X}^\top \bm{X} / \bm{X}^\top \tilde{V}^\dagger \bm{X}} & \text{($\bm{X} \in \textnormal{Im} ( \tilde{V} ) \setminus \{\bm{0}\}$)} \\
      0 & \text{($\bm{X} = \bm{0}$)}, 
    \end{cases},
\end{equation*}
because 
\begin{align*}
    \lim_{k \to \infty}\frac{\bm{X}^\top V_k^{-1} \bm{X}}{\bm{X}^\top (V_k^{-1} + \alpha I_p) \bm{X}} &= \lim_{k \to \infty} \frac{1}{1 + \alpha \bm{X}^\top \bm{X} / \bm{X}^\top V_k^{-1} \bm{X}} \\
    &= \lim_{k \to \infty} \frac{1}{1 + \alpha \bm{X}^\top \bm{X} / \left\{ \sum_{j=1}^p (1 / \lambda_j(V_k)) \cdot \left( \bm{X}^\top \bm{\gamma}_j(V_k) \right)^2 \right\}}
\end{align*}
for $\bm{X} \neq \bm{0}$.
Since $\bm{z}^\top \tilde{V} \bm{z} = 0$ yields $\bm{z} \in \textnormal{Ker} (\tilde{V})$ and $\textnormal{Im}(\tilde{V}) = \textnormal{Ker}(\tilde{V})^\perp$ holds, we have $\bm{x}^\top \bm{z} = 0$ for any $\bm{x} \in \textnormal{Im}(\tilde{V})$.
From this, it follows that $\textnormal{Im}(\tilde{V}) \subset H$, which is equivalent to $H^c \subset \textnormal{Im}(\tilde{V})^c$.
From what has already been proved, 
\begin{align*}
    \eqref{tochu} &= (\kappa - b) \mathrm{E} \left[ I(\bm{X} \in H^c) \right] \\
    &= (\kappa - b) (1 - P (H)),
\end{align*}
that is,
\begin{equation*}
    p \geq \kappa (1 - P (H)),
\end{equation*}
which contradicts Condition~\ref{conm}-(F). 

This completes the proof. 
\end{proof}

\begin{remark}
The proof for \eqref{eseq:prex2} is similar. 
\end{remark}

\subsection{Proof of Corollary~\ref{cor:prexcon}}
\begin{proof}[Proof of Corollary~\ref{cor:prexcon}]
Since the population distribution is continuous, any $(p-1)$ data points almost surely lie in a $(p-1)$-dimensional linear subspace. 
Hence, $\hat{P}_n(H) \leq (p-1) / n$ almost surely.
From Condition~\ref{conm}-(F), it follows that $(p-1) / n < 1 - p / \kappa$, and hence $np / (n-p+1) < \kappa$. 
\end{proof}

\subsection{Proof of Lemma~\ref{lem:prexnescon}}
The main strategy for proving Lemma~\ref{lem:prexnescon} is essentially the same as that adopted in Tyler~\cite{T88}. 

\begin{proof}[Proof of Lemma~\ref{lem:prexnescon}]
  \textbf{}

  \begin{description}
    \item[(1)] 
    Premultiplying $V^{-1}$ to the estimating equation \eqref{eseq:pro1} and taking the trace yield
      \begin{equation*}
        p = \frac{1}{n} \sum_{i=1}^n w\left( \bm{x}_i^\top \left\{V^{-1} + \alpha I_p \right\} \bm{x}_i \right) \bm{x}_i^\top V^{-1} \bm{x}_i. 
      \end{equation*}
    This gives
      \begin{align}
        p &= \frac{1}{n} \sum_{i=1}^n \psi\left( \bm{x}_i^\top \left\{V^{-1} + \alpha I_p \right\} \bm{x}_i \right) \frac{\bm{x}_i^\top V^{-1} \bm{x}_i}{\bm{x}_i^\top \left\{V^{-1} + \alpha I_p \right\} \bm{x}_i} \label{nesto0}\\
        &\leq \frac{1}{n} \sum_{i=1}^n \psi\left( \bm{x}_i^\top \left\{V^{-1} + \alpha I_p \right\} \bm{x}_i \right). \label{nesto}
      \end{align}
    It follows from \eqref{nesto} that $p \leq \kappa$. 
    \item[(2)] 
    From the inequality \eqref{nesto}, we have $p \leq \{1-\hat{P}_n(\bm{0})\} \kappa$, which is equivalent to $\hat{P}_n(\bm{0}) \leq 1- p / \kappa$. 
  \end{description}
\end{proof}

\subsection{Proof of Lemma~\ref{exbp}}
The main strategy for proving Lemma~\ref{exbp} is essentially the same as that adopted in Lemma 3.1 of Tyler~\cite{T8614}. 

\begin{proof}[Proof of Lemma~\ref{exbp}]
\textbf{}

  \begin{description}
      \item[(1)]
      It follows from
      \begin{align*}
        \varepsilon_m &< 1 - \frac{p}{\kappa} - \frac{p-1}{n} \\
        &< 1 - \frac{p}{\kappa} - \frac{p-1}{n+m}
      \end{align*}
      that
      \begin{align*}
        \frac{p-1+m}{n+m} &< 1 - \frac{p}{\kappa}.
      \end{align*}
      Since the good data $\mathrm{X}$ defined in Subsection~\ref{ssec:AdBP} consists of data points drawn from a continuous distribution, any $(p-1)$-dimensional linear subspace almost surely includes at most $(p-1)$ data points.
      By Corollary~\ref{cor:prexcon}, the assertion follows. 
      \item[(2)] 
      Since $\mathrm{X}$ consists of data points drawn from a continuous distribution, the assertion follows from Lemma~\ref{lem:prexnescon}-(2). 
  \end{description}
\end{proof}

\subsection{Proof of Lemma~\ref{bdbp}}
The main strategy for proving Lemma~\ref{bdbp} is essentially the same as that adopted in Theorem~1 of Tyler et al.~\cite{TYN23} and Lemma~3.2 of Tyler~\cite{T8614}. 

\begin{proof}[Proof of Lemma~\ref{bdbp}]
\textbf{}
  \begin{description}
    \item[(1)] 
    Consider $\mathrm{Z} \in \mathcal{E}_m(\mathrm{X})$ and the estimator $\hat{V}(\mathrm{Z})$. 
    Write $\mathrm{Z} = \{ \bm{z}_i ; i \in [n+m] \}$. 
    For simplicity, let $\hat{V}$ denote $\hat{V}(\mathrm{Z})$ and $\lambda_1$ the largest eigenvalue of $\hat{V}$. 
    Then
    \begin{equation*}
        \left( \frac{1}{\lambda_1} + \alpha \right) \bm{z}_i^\top \bm{z}_i \leq \bm{z}_i^\top (\hat{V}^{-1} + \alpha I_p) \bm{z}_i
    \end{equation*}
    for all $i \in [n+m]$. 
    Since Condition~\ref{conm}-(A), 
    \begin{equation*}
        w \left( \left( \frac{1}{\lambda_1} + \alpha \right) \bm{z}_i^\top \bm{z}_i \right) \geq w\left( \bm{z}_i^\top (\hat{V}^{-1} + \alpha I_p) \bm{z}_i \right) 
    \end{equation*}
    for all $i \in [n]$.
    Therefore,
    \begin{align*}
        \hat{V} &= \frac{1}{n+m} \sum_{i=1}^{n+m} w \left( \bm{z}_i^\top (\hat{V}^{-1} + \alpha I_p) \bm{z}_i \right) \bm{z}_i \bm{z}_i^\top \\
        &\preceq \frac{1}{n+m} \sum_{i=1}^{n+m} w \left( \left( \frac{1}{\lambda_1} + \alpha \right) \bm{z}_i^\top \bm{z}_i \right) \bm{z}_i \bm{z}_i^\top \\
        &= \frac{1}{(n+m) \left( 1 / \lambda_1 + \alpha \right)} \sum_{i=1}^{n+m} \psi \left( \left( \frac{1}{\lambda_1} + \alpha \right) \bm{z}_i^\top \bm{z}_i \right) \frac{\bm{z}_i \bm{z}_i^\top}{\bm{z}_i^\top \bm{z}_i} \\
        &\preceq \frac{\kappa}{(n+m) \left( 1 / \lambda_1 + \alpha \right)} \sum_{i=1}^{n+m} \frac{\bm{z}_i \bm{z}_i^\top}{\bm{z}_i^\top \bm{z}_i} \\
        &\preceq \frac{\kappa}{ 1 / \lambda_1 + \alpha} I_p,
    \end{align*}
    which implies that
    \begin{equation*}
        \lambda_1 \leq \frac{\kappa}{ 1 / \lambda_1 + \alpha}.
    \end{equation*}
    Lemma~\ref{lem:prexnescon}-(1) implies $\kappa \geq 1$, and hence $\lambda_1 \leq (\kappa - 1) / \alpha$. 

    \item[(2)]
    Consider the sequence $\{ \mathrm{Y}_\ell \}_{\ell \in \mathbb{N}}$ that satisfies $\mathrm{Z}_\ell = \mathrm{X} \cup \mathrm{Y}_\ell \in \mathcal{E}_m(\mathrm{X})$ for $\ell \in \mathbb{N}$. 
    Let
    \begin{align*}
      \omega_\ell &\coloneqq \mathrm{Tr}[V^{-1}(\mathrm{Z}_\ell)] \, (\neq 0), \\
      \Omega_\ell &\coloneqq \frac{V^{-1}(\mathrm{Z}_\ell)}{\omega_\ell}.
    \end{align*}
    By the compactness of $\mathcal{S}_{+}^{p*} \coloneqq \{ X \in \mathcal{S}_{+}^p ; \, \mathrm{Tr}[X] = 1\}$ in the Euclidean topology, the sequence $\{\Omega_\ell\}_{\ell \in \mathbb{N}}$ admits a convergent subsequence $\{\Omega_j\}_{j \in J}$.
    Hereafter, we shall denote this limit operation by
    \begin{equation*}
      \lim_{j \to \infty} \Omega_j = \Omega \in \mathcal{S}_{+}^{p*}.
    \end{equation*} 
    
    Let $\bm{e}$ be a vector satisfying that $\| \bm{e} \| = 1$ and $\Omega \bm{e} \neq \bm{0}$, where such $\bm{e}$ exists because of $\Omega \neq 0_{p \times p}$. 
    Write $\mathrm{Y}_\ell = \{ \bm{y}_{i, \ell} \in \mathbb{R}^p ; i \in [m] \}$, and set
    \begin{align*}
      r_{i, \ell} &\coloneqq \left( \bm{y}_{i,\ell}^\top \bm{y}_{i,\ell} \right)^{1/2} , \\
      \bm{u}_{i,\ell} &\coloneqq 
      \begin{cases}
        \frac{\bm{y}_{i,\ell}}{r_{i,\ell}} & \text{($r_{i,\ell} \neq 0$)}, \\
        \bm{e} & \text{(otherwise)}, 
      \end{cases}
    \end{align*}
    for $i \in [m]$. 
    Since $\mathbb{S}^{p-1}$ is compact in the Euclidean topology, the product space $\mathcal{S}_{+}^{p*} \times \mathbb{S}^{p-1}$ is also compact in the product topology, and hence there exists an index set $J$ and a point $\bm{u}_i \in \mathbb{S}^{p-1}$ such that 
    \[  
        \lim_{j \to \infty} \bm{u}_{i,j} = \bm{u}_i. 
    \]
    Let $\bar{P}_{\Omega} \coloneqq I_p - \Omega \Omega^\dagger $ denote the orthogonal projection onto $\textnormal{Ker}(\Omega)$. 
    It holds that
    \begin{align*}
      & V_\ell = \frac{1}{n+m} \sum_{i=1}^{n+m} w(\bm{z}_i^\top (V_\ell^{-1} + \alpha I_p) \bm{z}_i) \bm{z}_i \bm{z}_i^\top \\
      \iff& I_p = \frac{1}{n+m} \sum_{i=1}^{n+m} w(\bm{z}_i^\top (\omega_\ell \Omega_\ell + \alpha I_p) \bm{z}_i) \omega_\ell \Omega_\ell^{1/2} \bm{z}_i \bm{z}_i^\top \Omega_\ell^{1/2} \\
      \iff& \bar{P}_\Omega = \frac{1}{n+m} \sum_{i=1}^{n+m} w(\bm{z}_i^\top (\omega_\ell \Omega_\ell + \alpha I_p) \bm{z}_i) \omega_\ell \bar{P}_\Omega \Omega_\ell^{1/2} \bm{z}_i \bm{z}_i^\top \Omega_\ell^{1/2} \bar{P}_\Omega, 
    \end{align*}
    which implies
    \begin{equation}
      \textnormal{rank}(\bar{P}_\Omega) = \frac{1}{n+m} \sum_{i=1}^{n+m} w(\bm{z}_i^\top (\omega_\ell \Omega_\ell + \alpha I_p) \bm{z}_i) \omega_\ell \bm{z}_i^\top \Omega_\ell^{1/2} \bar{P}_\Omega \Omega_\ell^{1/2} \bm{z}_i \label{lbdtc}. 
    \end{equation}
    Hence, \eqref{lbdtc} is a necessary condition for $\mathrm{Z} \in \mathcal{E}_m(\mathrm{X})$. 
    If $\Omega \bm{x}_i \neq \bm{0}$, then
    \begin{align*}
      & w(\bm{x}_i^\top (\omega_j \Omega_j + \alpha I_p) \bm{x}_i) \omega_j \bm{x}_i^\top \Omega_j^{1/2} \bar{P}_\Omega \Omega_j^{1/2} \bm{x}_i \\
      &= \psi(\bm{x}_i^\top (\omega_j \Omega_j + \alpha I_p) \bm{x}_i) \frac{\bm{x}_i^\top \Omega_j^{1/2} \bar{P}_\Omega \Omega_j^{1/2} \bm{x}_i}{\bm{x}_i^\top \left\{\Omega_j + \left(\alpha / \omega_j \right) I_p\right\} \bm{x}_i}. 
    \end{align*}
    Recalling Condition~\ref{conm}-(B), it follows from $\bar{P}_{\Omega} \Omega^{1/2} = 0_{p \times p}$ that
    \begin{equation*}
      \lim_{j \to \infty} \psi(\bm{x}_i^\top (\omega_j \Omega_j + \alpha I_p) \bm{x}_i) \frac{\bm{x}_i^\top \Omega_j^{1/2} \bar{P}_\Omega \Omega_j^{1/2} \bm{x}_i}{\bm{x}_i^\top \left\{\Omega_j + \left(\alpha / \omega_j \right) I_p\right\} \bm{x}_i} = 0. 
    \end{equation*}
    In a similar manner, if $\Omega \bm{u}_i \neq \bm{0}$, then
    \begin{equation*}
      \lim_{j \to \infty} \psi(\bm{y}_{i,j}^\top (\omega_j \Omega_j + \alpha I_p) \bm{y}_{i,j}) \frac{\bm{u}_{i,j}^\top \Omega_j^{1/2} \bar{P}_\Omega \Omega_j^{1/2} \bm{u}_{i,j}}{\bm{u}_{i,j}^\top \left\{\Omega_j + \left(\alpha / \omega_j \right) I_p\right\} \bm{u}_{i,j}} = 0.
    \end{equation*}
    Combining these and \eqref{lbdtc}, which is a necessary condition for $\mathrm{Z} \in \mathcal{E}_m(\mathrm{X})$, gives
    \begin{equation}\label{lbdtc3}
      \textnormal{rank}(\bar{P}_\Omega) = \lim_{j \to \infty} \frac{1}{n+m} \sum_{\bm{z} \in \mathrm{Z}_{0,j}} \psi(\bm{z}^\top (\omega_j \Omega_j + \alpha I_p) \bm{z}) \frac{\bm{z}^\top \Omega_j^{1/2} \bar{P}_\Omega \Omega_j^{1/2} \bm{z}}{\bm{z}^\top \left\{\Omega_j + \left(\alpha / \omega_j \right) I_p\right\} \bm{z}},
    \end{equation}
    where $\mathrm{Z}_{0,j} = \{ \bm{x}_i \in \mathrm{X} ; \, \Omega \bm{x}_i = \bm{0} \} \cup \{ \bm{y}_{i,j} \in \mathrm{Y}_j ; \, \Omega \bm{u}_i = \bm{0} \}$. 
    From \eqref{nesto0}, which is another necessary condition for $\mathrm{Z} \in \mathcal{E}_m(\mathrm{X})$, it follows that
    \begin{align*}
      p &\geq \lim_{j \to \infty} \frac{1}{n+m} \left\{ \sum_{\bm{z} \in \mathrm{Z}_{0,j}} \psi\left(\bm{z}^\top (\omega_j \Omega_j + \alpha I_p) \bm{z}\right) \frac{\bm{z}^\top \Omega_j \bm{z}}{\bm{z}^\top \left\{\Omega_j + \left(\alpha / \omega_j \right) I_p\right\} \bm{z}} \right.\\
      &\hspace{7em} \left. + \sum_{\bm{x}_i \in \mathrm{X} \cap \textnormal{Ker}(\Omega)^c} \psi\left(\bm{x}_i^\top (\omega_j \Omega_j + \alpha I_p) \bm{x}_i\right) \frac{\bm{x}_i^\top \Omega_j \bm{x}_i}{\bm{x}_i^\top \left\{\Omega_j + \left(\alpha / \omega_j \right) I_p\right\} \bm{x}_i} \right\} \\
      &\geq \lim_{j \to \infty} \frac{1}{n+m} \left\{ \sum_{\bm{z} \in \mathrm{Z}_{0,j}} \psi\left(\bm{z}^\top (\omega_j \Omega_j + \alpha I_p) \bm{z}\right) \frac{\bm{z}^\top \Omega_j^{1/2} \bar{P}_\Omega \Omega_j^{1/2} \bm{z}}{\bm{z}^\top \left\{\Omega_j + \left(\alpha / \omega_j \right) I_p\right\} \bm{z}} \right. \\
      &\hspace{7em} + \left. \sum_{\bm{x}_i \in \mathrm{X} \cap \textnormal{Ker}(\Omega)^c} \psi\left(\bm{x}_i^\top (\omega_j \Omega_j + \alpha I_p) \bm{x}_i\right) \frac{\bm{x}_i^\top \Omega_j \bm{x}_i}{\bm{x}_i^\top \left\{\Omega_j + \left(\alpha / \omega_j \right) I_p\right\} \bm{x}_i} \right\}.
    \end{align*}
    From \eqref{lbdtc3}, we obtain
    \begin{align*}
      p &\geq \textnormal{rank}(\bar{P}_\Omega) + \frac{1}{n + m} \sum_{\bm{x}_i \in \mathrm{X} \cap \textnormal{Ker}(\Omega)^c} \lim_{j \to \infty} \psi\left(\bm{x}_i^\top (\omega_j \Omega_j + \alpha I_p) \bm{x}_i\right) \frac{\bm{x}_i^\top \Omega_j \bm{x}_i}{\bm{x}_i^\top \left\{\Omega_j + \left(\alpha / \omega_j \right) I_p\right\} \bm{x}_i}.
    \end{align*}
    Letting $\omega_j \to + \infty \, (j \to \infty)$ yields
    \begin{align*}
      p &\geq \textnormal{rank}(\bar{P}_\Omega) + \frac{1}{n + m} \sum_{\bm{x}_i \in \mathrm{X} \cap \textnormal{Ker}(\Omega)^c} \kappa. 
    \end{align*}
    From $\# \{ \bm{x} \in \mathrm{X} \cap \textnormal{Ker}(\Omega) \} \leq \textnormal{rank}(\bar{P}_\Omega)$ almost surely, it follows that
    \begin{align}
      & p \geq \textnormal{rank}(\bar{P}_\Omega) + \frac{n - \textnormal{rank}(\bar{P}_\Omega)}{n + m} \kappa \notag\\
      \iff& \varepsilon_m \geq 1 - \frac{n\{p - \textnormal{rank}(\bar{P}_\Omega)\}}{\kappa \{n - \textnormal{rank}(\bar{P}_\Omega)\}}. \label{lbdtc4}
    \end{align}
    Since the right-hand side of \eqref{lbdtc4} is increasing with respect to $\textnormal{rank}(\bar{P}_\Omega)$, it holds that $\varepsilon_m \geq 1 - \kappa / p$ for $\textnormal{rank}(\bar{P}_\Omega) = 0$.
    The conclusion follows by the contrapositive of the above. 
  \end{description}
\end{proof}

\subsection{Proof of Theorem~\ref{bpbp}}
The main strategy for proving Theorem~\ref{bpbp} is essentially the same as that adopted in Theorem~3.1 of Tyler~\cite{T8614}. 

\begin{proof}[Proof of Theorem~\ref{bpbp}]
When $\varepsilon_m > 1 - p / \kappa$, setting $Y = \{ \bm{0}, \ldots, \bm{0} \}$ yields $\hat{P}_{n+m}(\{\bm{0}\}) > 1 - p / \kappa$, which in turn implies that $Z \notin \mathcal{E}_m(\mathrm{X})$, and hence $\hat{V}(\mathrm{Z})$ breaks down, in accordance with Lemma \ref{exbp}-(2).
When $\varepsilon_m < 1 - p / \kappa - (p-1) / n$, from Lemmas~\ref{exbp}-(1) and \ref{bdbp}, we see that $\hat{V}(\mathrm{Z})$ does not breakdown. 
\end{proof}

\section{Concluding remarks}\label{summary}
\begin{table}[t]
    \centering
    \includegraphics[width=\textwidth]{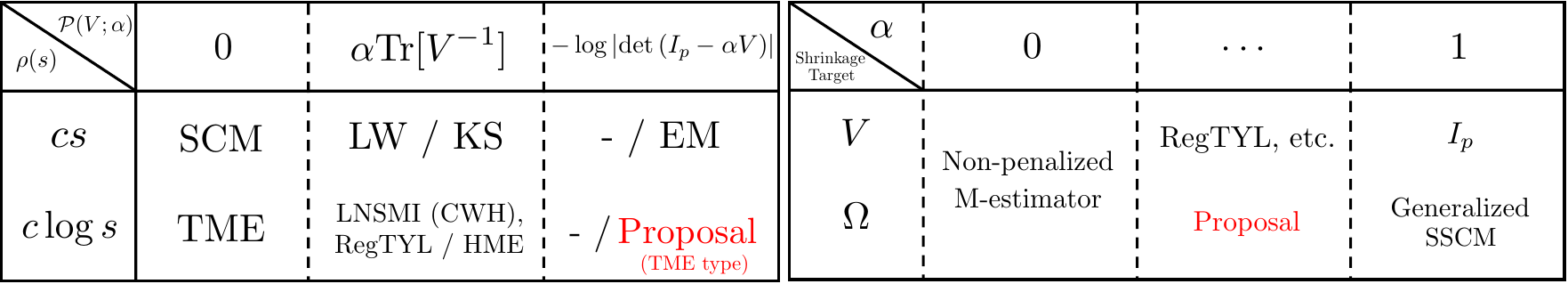}
    \caption{Correspondence table of estimators, where $c$ denotes an arbitrary constant. }
    \label{tab:summary}
\end{table}

We proposed an estimator of the scatter matrix for elliptical distributions. 
The proposed estimator is obtained by incorporating an EM-type estimator into the precision structure estimation step of the classical $M$-estimation framework. 
We investigated conditions for the existence of a solution to the estimating equation~\eqref{eseq:pro1} in $\mathcal{S}_{++}^p$, and discussed robustness of the proposed estimator through the concept of AdBP. 
In particular, Lemma~\ref{bdbp}-(1) shows that the proposed estimator does not \textit{breakdown from above}, which is a distinctive robustness of the proposed estimator. 

We further examined the statistical properties of the proposed method from the perspectives of penalized $M$-estimation and Bayesian estimation. 
Moreover, as summarized in Table~\ref{tab:summary}, we discussed correspondences between the proposed estimator and various existing estimators. 
Through numerical experiments, we demonstrated that the performance of TME and LNSMI deteriorates in the presence of clustered outliers, whereas the proposed Tyler type estimator remains stable and provides accurate estimation even in such settings. 
Based on arguments similar to those in the existing literature, we also proposed a method for selecting the shrinkage parameter of the proposed Tyler type estimator. 

Finally, we outlined several directions for future work. 
These include, but are not limited to, the uniqueness of solutions to the minimization problem~\eqref{eq:pmf20} and the global convergence properties of the proposed estimator. 

\section*{Acknowledgements}
The second author was supported by Japan Society for the Promotion of Science KAKENHI Grant Numbers JP23K16851 and the Research Fellowship Promoting International Collaboration of the Mathematical Society of Japan.
The third author was supported by Japan Society for the Promotion of Science KAKENHI Grant Number 25K07133.

\startappendix

\section{Reason for focusing on precision estimation in TME and LNSMI}
From Figure~\ref{fig:r1}-(b, d, f), we observe that the estimation accuracy of TME and LNSMI decreases monotonically but in a nonlinear manner, whereas that of SCM decreases monotonically and linearly.
Based on this observation, we suspected that the difficulty lies in the part of the procedure related to estimating the precision structure.

Indeed, the following observations were obtained, which suggest that difficulties arise specifically in the estimation of the precision structure. 

First, we generate data containing outliers using the procedure described in Section~\ref{sec:simusets}, and record the indices corresponding to the outlier points. 
Next, based on the generated data, we compute TME and use the resulting estimate $\hat{V}_{\textnormal{TME}}$ to calculate the Mahalanobis distance $\sqrt{\bm{x}_i^\top \hat{V}_\textnormal{TME}^{-1} \bm{x}_i} \; (i \in [n])$ for each data point. 
We then examine the indices of the data points with the largest Mahalanobis distances under this estimate. 

As a result, we found that once the dimension exceeds the point at which the estimator’s performance begins to deteriorate sharply, the indices of data points not designated as outliers start to appear above the designated outlier indices. 
This indicates that, toward the end of the algorithm, data points that are not true outliers are being treated as outliers, while those originally designated as outliers are being treated as non-outliers. 

\section{Number of resampling iterations}
This type of discussion is often referred to as the \textit{coupon collector problem}. 
For sampling with replacement from a dataset of size $N$, the expected number of resampling iterations $m_k$ required to obtain $k$ distinct samples ($k \in \mathbb{N}$, $k \leq N$) is 
\begin{equation*}
    \mathrm{E}[m_k] = N \left( \sum_{i = N-k+1}^N \frac{1}{i} \right) \text{.}
\end{equation*}
For example, consider the case where $k = \lceil 0.975 N \rceil$. 

\section[Generation of the true covariance matrix]{Generation of the true covariance matrix $\Sigma(c_1, c_2)$}

\begin{algorithm}[H]
    \caption{Generation method for the 1st through $p$-th contribution ratios}
    \begin{algorithmic}
        \STATE \textbf{Input:} $p \ge 4$, $c_1$, $c_2$, $s=1$
        \STATE \textbf{begin}
            \STATE \hspace{1em} {if $\left(\left( (p-1)c_1 + \frac{c_1}{c_2} < 1 \right) \textnormal{or} \left(c_1+(p-1)\frac{c_1}{c_2} > 1 \right)\right)$ then}
                \STATE \hspace{2em} {$\bm{r} = \left(\frac{1}{p}, \ldots, \frac{1}{p}\right) \in \Delta^{p-1}$}
            \STATE \hspace{1em} {else then}
                \STATE \hspace{2em} {while ($s \leq 10^5$) do}
                    \STATE \hspace{3em} {$\bm{l} \coloneqq (l_1, \ldots, l_{p-2})^{\top} \sim \textnormal{Dir}_{p-2}(1)$}
                    \STATE \hspace{3em} {if $\left(\frac{c_1/c_2}{1-c_1/c_2-c_1} \leq \min\left\{l_1, \ldots, l_{p-2}\right\} \right)$ and $\left(\max\left\{l_1, \ldots, l_{p-2}\right\} \leq \frac{c_1}{1-c_1/c_2-c_1}\right)$ then break;}
                    \STATE \hspace{3em} {$s \leftarrow s+1$}
                \STATE \hspace{2em} {end}
                \STATE \hspace{2em} {if $s = 100001$ then}
                    \STATE \hspace{3em} {$\bm{l} = \left(\frac{1}{p-2}, \ldots, \frac{1}{p-2}\right) \in \Delta^{p-3}$}
                \STATE \hspace{2em} {end}
                \STATE \hspace{2em} {$\bm{r} = \textnormal{sort}\left( c_1, \frac{c_1}{c_2}, \bm{l}^{\top} \right) \in \Delta^{p-1}$}
            \STATE \hspace{1em} {end} 
        \STATE \textbf{end}
        \STATE \textbf{Return:} {$\bm{r} \in \Delta^{p-1}$}
    \end{algorithmic}
\end{algorithm}

In the numerical experiments, the true covariance matrix $\Sigma(c_1, c_2)$ is generated randomly under specified values of the largest contribution ratio $c_1$, the condition number $c_2$, and $\mathrm{Tr}[\Sigma(c_1, c_2)]$. 
Once the $p$ eigenvectors of $\Sigma(c_1, c_2)$ and the contribution ratios from the second to the $(p-1)$-th components are appropriately generated, the true covariance matrix $\Sigma(c_1, c_2)$ can be randomly constructed. 

The $p$ eigenvectors of the true covariance matrix are obtained by generating $p$ random samples from $\mathcal{N}_p(\bm{0}, I_p)$ and orthogonalizing them using the Gram--Schmidt procedure.
The contribution ratios, on the other hand, are generated as described in Algorithm~1. 

In Algorithm~1, we first branch according to whether a matrix satisfying the specified largest contribution ratio $c_1$ and smallest contribution ratio $c_1 / c_2$ can exist.
If such a matrix exists, the contribution ratios are generated from $\textnormal{Dir}_{p-2}(1)$ in principle. 
However, depending on the values of $c_1$ and $c_2$, generating random samples from $\textnormal{Dir}_{p-2}(1)$ may become difficult. 
In such cases, all contribution ratios from the second to the $(p-1)$-th components are set equal. 

\newpage
\section{Table of RMSE values corresponding to Figure~\ref{fig:r1}}
\begin{table}[H]
    \caption{Table of RMSE values corresponding to Figure~\ref{fig:r1}. }
    \centering
    \includegraphics[height=0.9\textheight]{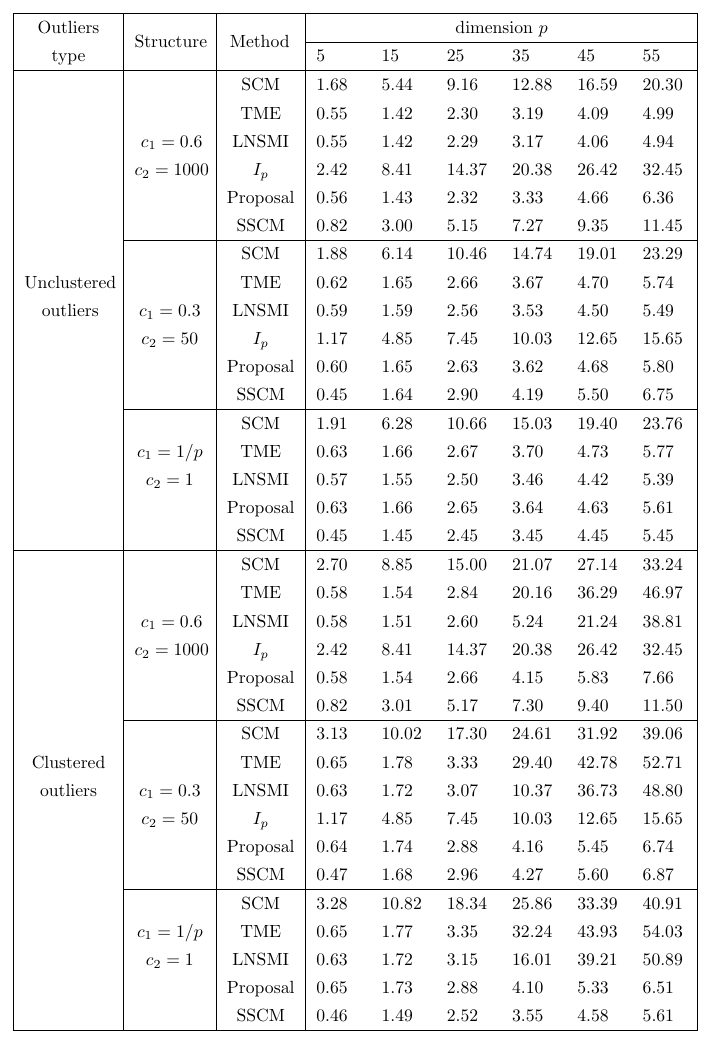}\label{tab:tab}
\end{table}

Table~\ref{tab:tab} reports the RMSE values corresponding to Figure~\ref{fig:r1}. 

\section{Numerical results under various scenarios -1}
Under the numerical settings described in Sections~\ref{sec:simusets} and~\ref{coef}, we additionally present numerical results for several alternative parameter configurations, following the same format as in Section~\ref{sec:simr1}. 

\setcounter{secnumdepth}{3} 
\renewcommand{\thesubsubsection}{\arabic{subsubsection}} 
\subsection[Comparison across outlier ratios]{Comparison across $\xi$}

\begin{figure}[h]
    \centering
    \includegraphics[width=0.90\columnwidth]{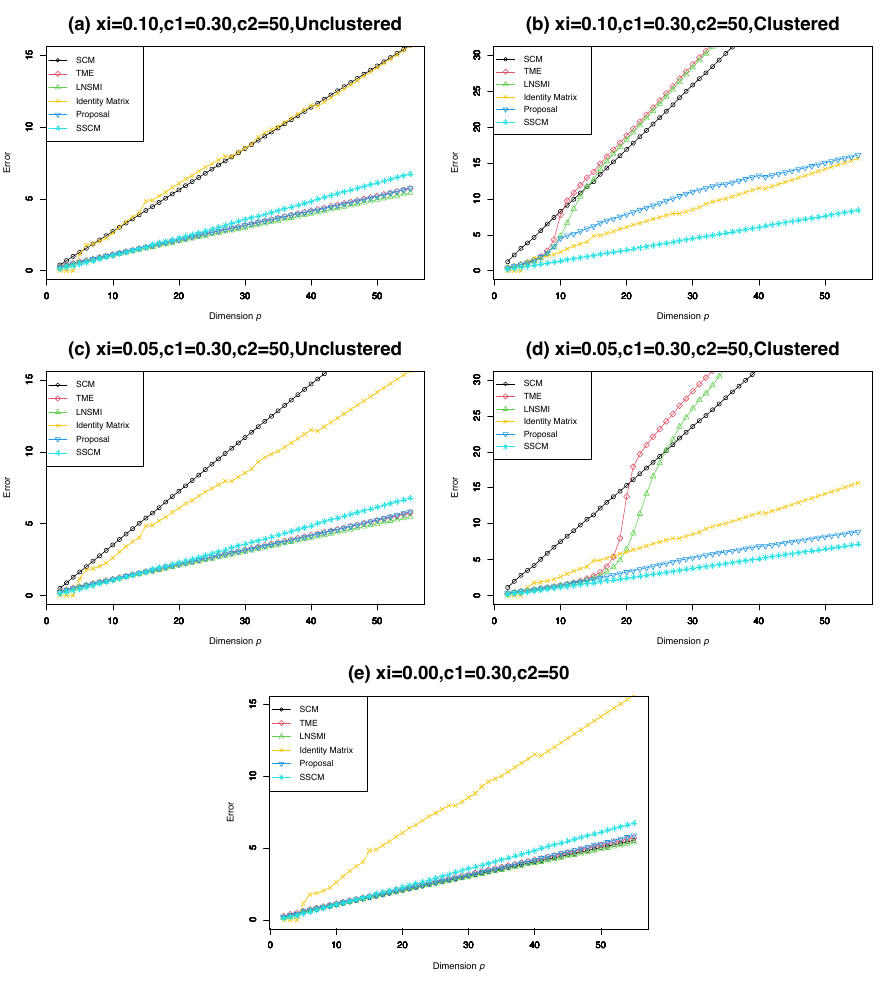}
    \caption{
    Plots of the RMSE values of each estimator for different data dimensions $p$, with lines connecting the plotted points. 
    The horizontal axis represents the dimension $p$ $( = 2, \ldots, 55 )$, and the vertical axis represents the RMSE. 
    Figure~(a, c) show the results for unclustered outliers, Figure~(b, d) correspond to clustered outliers, and Figure~(e) represents the case with no outliers, that is, the situation in which all data points are simply generated as random samples from a multivariate normal distribution. 
    The other settings are $N = 100$ and $k = 10$. In Figure~(a, c, e), the curves representing TME, LNSMI, and the proposed Tyler type estimator nearly overlap. 
    Furthermore, in Figure~(a), the curves representing SCM and $I_p$, in Figure~(b), those representing TME and LNSMI, and in Figure~(e), those representing SCM, TME, LNSMI, and the proposed Tyler type estimator also nearly overlap. 
    }\label{fig:r4}
\end{figure}

We now discuss the numerical results. 
First, as shown in Figure~\ref{fig:r4}, a substantial degradation in the performance of the TME and the LNSMI is again observed when clustered outliers are present. 

Figure~\ref{fig:r4}-(b, d, e) illustrates how the estimator performance varies with respect to the proportion of clustered outliers $\xi$, revealing a characteristic transition behavior. 
Specifically, in Figure~\ref{fig:r4}-(d), when $\xi = 0.03$, a pronounced deterioration in the performance of the TME occurs around dimension $p = 33$, whereas in Figure~\ref{fig:r4}-(b), when $\xi = 0.1$, a similar phenomenon appears already around $p = 10$. 
These numerical observations are consistent with the fact that the asymptotic breakdown point of the TME lies between $1/(p+1)$ and $1/p$ \cite{DT05}. 

\subsection[Comparison across magnitude of outliers]{Comparison across $k$}
\begin{figure}[h]
    \centering
    \includegraphics[width=0.90\columnwidth]{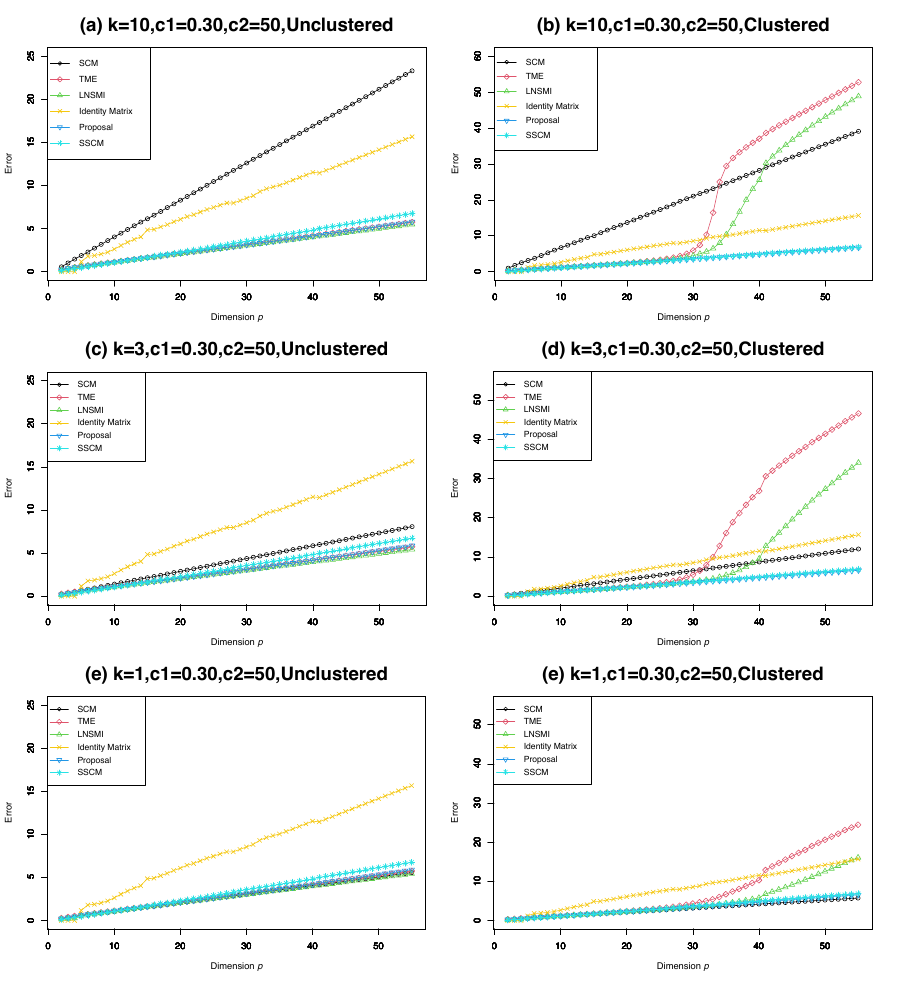}
    \caption{
    Plots of the RMSE values of each estimator for different data dimensions $p$, with lines connecting the plotted points. The horizontal axis represents the dimension $p$ $( = 2, \ldots, 55 )$, and the vertical axis represents the RMSE. 
    Figure~(a, c, e) show the results for unclustered outliers, while Figure~(b, d, f) correspond to clustered outliers. 
    The other settings are $N = 100$ and $\xi = 0.03$. 
    In Figure~(a, c, e), the curves representing TME, LNSMI, and the proposed Tyler type estimator nearly overlap. 
    Furthermore, in Figure~(b, d, f), the curves representing SSCM and the proposed Tyler type estimator also nearly overlap. 
    }\label{fig:r5}
\end{figure}

First, as shown in Figure~\ref{fig:r5}, a substantial degradation in the performance of the TME and the LNSMI is again observed when clustered outliers are present. 

Next, Figure~\ref{fig:r5}-(b, d, f) presents the estimator performance as a function of the parameter $k$, which controls the severity of clustered outliers. 
A similar transition behavior can also be observed in this case. 
Notably, even when $k = 1$, that is, in situations where no data points have exceptionally large $\ell_2$-norms, a substantial degradation in the performance of both the TME and the LNSMI is still observed.

Finally, although the behavior of the estimators with respect to variations in the maximum contribution ratio $c_1$, the condition number $c_2$, and the sample size $N$ was also examined, the corresponding results are omitted since no particularly notable changes were observed. 
In these settings as well, the presence of clustered outliers consistently leads to a marked deterioration in the performance of the TME and the LNSMI. 
As the true covariance structure deviates further from the identity matrix, the performance of the SSCM tends to deteriorate, but no additional noteworthy insights were obtained beyond these observations. 

\section{Numerical results under various scenarios -2}
Following existing studies \cite{CWH11, M76, T87, W12a}, we also report numerical results obtained under the $t$-distribution. 
As in the existing literature, these results indicate that the proposed Tyler type estimator successfully attains robustness even when the data follow a $t$-distribution. 

In this section, the data, including outliers, are generated from a $t$-distribution. 
This setting corresponds to the case of unclustered outliers, and none of the estimators exhibit breakdown behavior. 
The curves representing the performance of TME, LNSMI, SSCM, and the proposed Tyler type estimator nearly overlap. 

\begin{figure}[h]
    \centering
    \includegraphics[width=0.95\columnwidth]{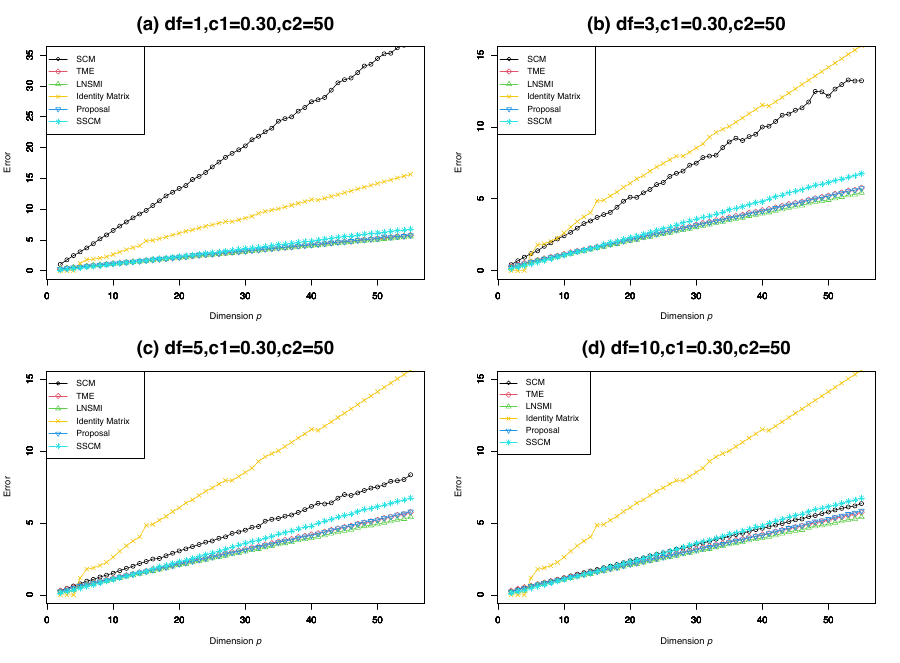}
    \caption{
    Plots of the RMSE values of each estimator for different data dimensions $p$, with lines connecting the plotted points. 
    The horizontal axis represents the dimension $p$ $( = 2, \ldots, 55 )$, and the vertical axis represents the RMSE. Figure~(a) shows the results when the data are generated solely from a $t$ distribution with degree of freedom $1$, Figure~(b) from a $t$ distribution with degree of freedom 3, Figure~(c) from a $t$ distribution with degree of freedom 5, and Figure~(d) from a $t$ distribution with degree of freedom 10. 
    Note that no outliers are generated separately in this setting. In Figure~(a), the curves representing TME, LNSMI, and the proposed Tyler type estimator nearly overlap. 
    In Figure~(b, c, d), the curves representing TME and the proposed Tyler type estimator nearly overlap. 
    }\label{fig:r0}
\end{figure}


\begin{thebibliography}{99}
\bibitem{AS07}
  Abramovich, Y. I. and Spencer, N. K. (2007),
  Diagonally Loaded Normalised Sample Matrix Inversion (LNSMI) for Outliers-Resistant Adaptive Filtering, 
  In: \textit{Proc. IEEE ICASSP}, Vol. 3, pp. 1105--1108. 

\bibitem{A98}
  Adrover, J. G. (1998),
  Minimax Bias-Robust Estimation of the Dispersion Matrix of a Multivariate Distribution, 
  \textit{The Annals of Statistics} \textbf{26}, 2301--2320. 
  
\bibitem{CGR18}
  Chen, M., Gao, C., and Ren, Z. (2018),
  Robust Covariance and Scatter Matrix Estimation Under Huber's Contamination Model, 
  \textit{The Annals of Statistics}  \textbf{46}, 1932--1960. 


\bibitem{CWEH10}
  Chen, Y., Wiesel, A., Elder, Y. C., and Hero, A. O. I\hspace{-1.2pt}I\hspace{-1.2pt}I. (2010),
  Shrinkage Algorithms for MMSE Covariance Estimation, 
  \textit{IEEE Transactions on Signal Processing}  \textbf{58}, 5016--5029. 

\bibitem{CWH11}
  Chen, Y., Wiesel, A., and Hero, A. O. I\hspace{-1.2pt}I\hspace{-1.2pt}I. (2011), 
  Robust Shrinkage Estimation of High-Dimensional Covariance Matrices, 
  \textit{IEEE Transactions on Signal Processing}  \textbf{59}, 4097--4107. 

\bibitem{D87}
  Davies, L. (1987), 
  Asymptotic Behavior of $S$-estimators and Multivariate Location Parameters and Dispersion Matrices, 
  \textit{The Annals of Statistics} \textbf{15}, 1269--1292. 

  

\bibitem{D98}
  D\"{u}mbgen, L. (1998), 
  On Tyler's $M$-functional of Scatter in High Dimension, 
  \textit{Annals of the Institute of Statistical Mathematics} \textbf{50}, 471--491. 

\bibitem{DT05}
  D\"{u}mbgen, L. and Tyler, D. E. (2005), 
  On the Breakdown Properties of Some Multivariate $M$-functionals, 
  \textit{Scandinavian Journal of Statistics} \textbf{32}, 247--264. 

\bibitem{DT16}
  D\"{u}mbgen, L. and Tyler, D. E. (2016), 
  Geodesic Convexity and Regularized Scatter Estimation, 
  arXiv:1607.05455. 

\bibitem{EM76}
  Efron, B. and Morris, C. (1976), 
  Multivariate Empirical Bayes and Estimation of Covariance Matrices, 
  \textit{The Annals of Statistics} \textbf{4}, 22--32. 



\bibitem{FHT08}
  Friedman, J., Hastie, T., and Tibshirani, R. (2008),
  Sparse Inverse Covariance Estimation with the Graphical Lasso, 
  \textit{Biostatistics} \textbf{9}, 432--441. 

\bibitem{GLN20}
  Goes, J., Lerman, G., and Nadler, B. (2020), 
  Robust Sparse Covariance Estimation by Thresholding Tyler's $M$-estimator, 
  \textit{The Annals of Statistics}  \textbf{48}, 86--110. 



\bibitem{HR09}
  Huber, P. J. and Ronchetti, E. M. (2009), 
  \textit{Robust Statistics} (2nd ed.), Wiley. 

\bibitem{H01}
  Hubert, M. (2001), 
  Discussion of \enquote*{Multivariate Outlier Detection and Robust Covariance Matrix Estimation}, 
  \textit{Technometrics} \textbf{43}, 303--306. 

\bibitem{JP01}
  Juan, J. and Prieto, F. J. (2001), 
  Using Angles to Identify Concentrated Multivariate Outliers, \textit{Technometrics} \textbf{43}, 311--322. 

\bibitem{KT91}
  Kent, T. J. and Tyler, D. E. (1991), 
  Redescending $M$-estimates of Multivariate Location and Scatter, 
  \textit{The Annals of Statistics} \textbf{19}, 2102--2119. 
  
\bibitem{KS08}
  Kubokawa, T. and Srivastava, M. (2008), 
  Estimation of the Precision Matrix of a Singular Wishart Distribution and its Application in High Dimensional Data, 
  \textit{Journal of Multivariate Analysis}  \textbf{99}, 1906--1928. 

\bibitem{LW04}
  Ledoit, O. and Wolf, M. (2004), 
  A Well-Conditioned Estimator for Large-Dimensional Covariance Matrices, 
  \textit{Journal of Multivariate Analysis}  \textbf{88}, 365--411. 



\bibitem{M76}
  Maronna, R. A. (1976), 
  Robust $M$-estimators of Multivariate Location and Scatter, 
  \textit{The Annals of Statistics} \textbf{4}, 51--67. 

\bibitem{MMYS17}
  Maronna, R. A., Martin, D. R., Yohai, V. J., and Salibián-Barrera, M. (2017), 
  \textit{Robust Statistics} (2nd ed), Wiley. 


\bibitem{MY95}
  Maronna, R. A. and Yohai, V. J. (1995), 
  The Behavior of the Stahel--Donoho Robust Multivariate Estimator, 
  \textit{Journal of the American Statistical Association} \textbf{90}, 330--341. 

\bibitem{OT14}
  Ollila, E. and Tyler, D. E. (2014), 
  Regularized $M$-Estimators of Scatter Matrix, 
  \textit{IEEE Transactions on Signal Processing}  \textbf{62}, 6059--6070. 

\bibitem{PCQ14}
  Pascal, F., Chitour, Y., and Quek, Y. (2014),
  Generalized Robust Shrinkage Estimator and its Application to STAP Detection Problem, 
  \textit{IEEE Transactions on Signal Processing}  \textbf{62}, 5640--5651. 

\bibitem{PP01}
  Pe\~na, D. and Prieto, F. J. (2001), 
  Multivariate Outlier Detection and Robust Covariance Matrix Estimation, 
  \textit{Technometrics} \textbf{43}, 286--300. 

\bibitem{PP07}
  Pe\~na, D. and Prieto, F. J. (2007), 
  Combining Random and Specific Directions for Outlier Detection and Robust Estimation in High-Dimensional Multivariate Data, 
  \textit{Journal of Computational and Graphical Statistics} \textbf{16}, 228--254. 

\bibitem{RR19}
  Raymaekers, J. and Rousseeuw, P. (2019),
  A Generalized Spatial Sign Covariance Matrix, 
  \textit{Journal of Multivariate Analysis}  \textbf{171}, 94--111. 

\bibitem{RW01}
  Rocke, D. M. and Woodruff, D. L. (2001), 
  Discussion of \enquote*{Multivariate Outlier Detection and Robust Covariance Matrix Estimation}, 
  \textit{Technometrics} \textbf{43}, 300--303. 


\bibitem{R85}
  Rousseeuw, P. J. (1985),
  Multivariate Estimation with High Breakdown Point, 
  \textit{Mathematical Statistics and Applications}  \textbf{8}, 283--297. 

\bibitem{Rv99}
  Rousseeuw, P. J. and van Driessen, K. (1999), 
  A fast algorithm for the Minimum Covariance Determinant estimator, \textit{Technometrics} \textbf{41}, 212--223. 


\bibitem{SW14}
  Soloveychik, I. and Wiesel, A. (2014),
  Tyler's Covariance Matrix Estimator in Elliptical Models with Convex Structure, 
  \textit{IEEE Transactions on Signal Processing} \textbf{62}, 5251--5259. 
  
\bibitem{S81}
  Stahel, W. A. (1981), 
  Breakdown of Covariance Estimator, 
  Research report 31, Fachgruppe f\"ur Statistik, E.T.H., Z\"urich. 

\bibitem{SBP14}
  Sun, Y., Babu, P., and Palomar, D. P. (2014),
  Regularized Tyler's Scatter Estimator: Existence, Uniqueness, and Algorithms, 
  \textit{IEEE Transactions on Signal Processing}  \textbf{62}, 5143--5156. 

\bibitem{TFNO23}
  Taskinen, S., Frahm, G., Nordhausen, K., and Oja, H. (2023), 
  A Review of Tyler's Shape Matrix and its Extensions, 
  In: Yi, M., Nordhausen, K. (eds.) \textit{Robust and Multivariate Statistical Methods}, Springer, Cham. pp.23--41. 

\bibitem{TT00}
  Tatsuoka, K. S. and Tyler, D. E. (2000), 
  On the Uniqueness of $S$-functionals and $M$-functionals Under Nonelliptical Distributions, 
  \textit{The Annals of Statistics} \textbf{28}, 1219--1243. 

\bibitem{T87}
  Tyler, D. E. (1987),
  A Distribution-Free $M$-estimator of Multivariate Scatter, 
  \textit{The Annals of Statistics}  \textbf{15}, 234--251. 

\bibitem{T88}
  Tyler, D. E. (1988),
  Some Results on the Existence, Uniqueness, and Computation of the $M$-estimates of Multivariate Location and Scatter,  
  \textit{SIAM Journal on Scientific and Statistical Computing} \textbf{9}, 354--362. 

\bibitem{T91}
  Tyler, D. E. (1991),
  Some issues in the robust estimation of multivariate location and scatter,  
  In: Stahel, W. and Weisberg, S. (eds.), \textit{Directions in Robust Statistics and Diagnostics 2}, Springer, New York. pp.327--336. 

\bibitem{T8614}
  Tyler, D. E. (2014), 
  Breakdown Properties of the $M$-estimators of Multivariate Scatter, 
  Technical report, Department of Statistics, Rutgers University, 
  arXiv:1406.4904. 

\bibitem{TYN23}
  Tyler, D. E., Yi, M., and Nordhausen, K. (2023), 
  Robust and Resistant Regularized Covariance Matrices, 
  arXiv:2307.15774. 

\bibitem{VKO00}
  Visuri, S., Koivunen, V., and Oja, H. (2000),
  Sign and Rank Covariance Matrices, 
  \textit{Journal of Statistical Planning and Inference} \textbf{91}, 557--575. 

\bibitem{vH14}
  van Erven, T. and Harremo\"{e}s, P. (2014), 
  R\'{e}nyi divergence and Kullback--Leibler divergence, 
  \textit{IEEE Transactions on Information Theory} \textbf{60}, 3797--3820. 

\bibitem{W08}
  Warton, D. I. (2008), 
  Penalized Normal Likelihood and Ridge Regularization of Correlation and Covariance Matrices, 
  \textit{Journal of the American Statistical Association} \textbf{103}, 340--349. 

\bibitem{W12a}
  Wiesel, A. (2012),
  Unified Framework to Regularized Covariance Estimation in Scaled Gaussian Models, 
  \textit{IEEE Transactions on Signal Processing}  \textbf{60}, 29--38.


\bibitem{WR94}
  Woodruff, D. L. and Rocke, D. M. (1994), 
  Computable Robust Estimation of Multivariate Location and Shape in High Dimension Using Compound Estimators, 
  \textit{Journal of the American Statistical Association} \textbf{89}, 888--896. 

\bibitem{ZSD23}
  Zhang, M., Song, Y., and Dai, W. (2023), 
  Fast Robust Location and Scatter Estimation: A Depth-Based Method, 
  arXiv:2305.07813. 

\bibitem{ZWG13}
  Zhang, T., Wiesel, A., and Greco, S. M. (2013), 
  Multivariate Generalized Gaussian Distribution: Convexity and Graphical Models, 
  \textit{IEEE Transactions on Signal Processing} \textbf{61}, 4141--4148. 
  
\end{thebibliography}
\end{document}